\documentclass[12pt]{article}
\usepackage{amsmath}
\usepackage{amssymb}
\usepackage{graphicx}
\usepackage{color}
\usepackage{algorithm}
\usepackage{hyperref}
\usepackage{cite}
\usepackage{float}
\usepackage{booktabs}
\usepackage{setspace}
\usepackage{authblk}
\usepackage{xcolor}
\usepackage{algorithmic}
\usepackage[margin=1in]{geometry}
\usepackage{times}
\usepackage{titlesec}
\usepackage{subfigure}
\usepackage{multirow}
\usepackage{longtable}
\usepackage{amsfonts}
\usepackage{amsthm}
\usepackage{enumitem}
\usepackage{caption}
\newcommand{\cmt}[1]{\hfill\{#1\}}

\definecolor{stepcolor}{RGB}{255,120,0}

\newcommand{\Step}[1]{\STATE \textbf{\textcolor{stepcolor}{// #1}}}

\definecolor{myorange}{RGB}{255,120,0}
\titleformat{\section}[hang]
{\normalfont\Large\bfseries\color{myorange}}
{\color{myorange}\thesection}
{1em}
{}

\titleformat{\subsection}[hang]
{\normalfont\large\bfseries\color{myorange}}
{\color{myorange}\thesubsection}
{1em}
{}

\titleformat{\subsubsection}[hang]
{\normalfont\normalsize\bfseries\color{myorange}}
{\color{myorange}\thesubsubsection}
{1em}
{}

\newtheorem{theorem}{Theorem}

\usepackage[margin=1in]{geometry}
\usepackage{times}
\title{\color{orange}\textbf{Wheel Dynamic Load Estimation Method Based on Gas Pressure of Hydro-pneumatic Suspension}}

\author[1]{Qijun Liao}
\author[1*]{Jue Yang}
\author[2$\dagger$]{Subhash Rakheja}
\author[1$\dagger$]{Yiting Kang}
\author[2$\dagger$]{Yumeng Yao}
\author[3$\dagger$]{Yuming Yin}

\affil[1]{School of Mechanical Engineering, University of Science and Technology Beijing, Beijing, China}
\affil[2]{Department of Mechanical, Industrial and Aerospace of Engineering, Concordia University, Montreal, Canada}
\affil[3]{College of Mechanical Engineering, Zhejiang University of Technology, Zhejiang, China}
\affil[*]{Contact: yangjue@ustb.edu.cn}

\date{}

\begin{document}
	
	\maketitle
	\renewcommand{\thefootnote}{\fnsymbol{footnote}}
	\footnotetext[1]{Corresponding author}
	\footnotetext[2]{Equal contribution}
	\renewcommand{\thefootnote}{\arabic{footnote}}
	
	\begin{abstract}
		This paper proposes a novel method to estimate the wheel dynamic load based on the gas pressure of a hydro-pneumatic suspension. A nonlinear coupled model between suspension chamber pressure and tire-ground contact force is developed, integrating suspension dynamics with its nonlinear stiffness characteristics. An iterative algorithm is developed to estimate wheel dynamic load using data from only one single pressure sensor, thereby eliminating the reliance on traditional tire models and complex multi-sensor fusion frameworks. This method effectively reduces hardware redundancy and minimizes the propagation of measurement errors. The proposed model is experimentally validated on a dedicated suspension test bench, demonstrating satisfactory agreement between the measured and estimated data. Additionally, co-simulation with TruckSim verifies the accuracy of both the calculated damping force and wheel dynamic load, demonstrating the effectiveness of the model on characterizing the mechanical behavior of the hydro-pneumatic suspension system. The proposed method provides a practical, low-cost, and efficient solution with minimal hardware dependencies.
		
		\noindent\textbf{Keywords:} Wheel dynamic load; Hydro-pneumatic suspension; Real-time estimation; Vehicle stability control; Vehicle dynamics; TruckSim co-simulation
	\end{abstract}
	
	\section{Introduction}
	
	The dynamic performance and load transfer efficiency of vehicle suspension systems have become critical factors in improving overall vehicle handling, safety, and energy efficiency, particularly for distributed electric drive chassis. Wheel dynamic loads not only affect vehicle handling and safety but also play a crucial role in the effectiveness of vehicle stability control systems, anti-lock braking systems (ABS), and autonomous driving algorithms. It is thus essential to accurately estimate dynamic wheel loads to enhance vehicle adaptability and stability.
	
	The dynamic forces generated at the tire–road interface are directly related to the directional stability of the vehicle and the stresses imposed on the road structure. In the context of road loading, heavy freight vehicles are known to induce significant dynamic stresses on pavements, leading to fatigue and premature failure. Furthermore, the roll stability of vehicles with high load capacity and a high center of gravity is directly linked to the transfer of dynamic tire forces, also known as the lateral load transfer ratio (LTR). Real-time monitoring of dynamic tire forces can effectively assist in detecting potential directional instabilities in heavy vehicles, as well as assessing their road-friendliness.
	
	For two-axle vehicles, wheel dynamic loads can be directly calculated through body acceleration. However, for multi-axle chassis, traditional methods are less effective due to increased system complexity. Therefore, there is a growing interest in leveraging suspension system dynamics. Specifically, the output force and extension acceleration of the suspension system for estimating wheel dynamic loads. This estimation process becomes more complex and challenging, especially in multi-axle chassis due to the increased degrees of freedom and interdependencies among suspension components.
	
	Hydro-pneumatic suspensions are widely used in heavy-duty and special-purpose vehicles such as mining trucks, armored vehicles, and high-end passenger cars, owing to their nonlinear stiffness characteristics and energy dissipation performance. The hydro-pneumatic suspension system typically consists of a hydraulic cylinder and a gas chamber, which jointly provide load buffering and attitude adjustment via the coupled action of gas compression and oil damping. However, accurately obtaining the tire-ground pressure (TGP) is a core problem in vehicle stability control. The gas pressure of the hydro-pneumatic suspension system can offer a new approach for estimating tire dynamic loads.
	
	Research on hydro-pneumatic suspension systems has primarily focused on characterizing gas behavior in the suspension chamber and the resulting oil damping force. Many studies describe the relationship between gas pressure and chamber volume using the adiabatic equation $PV^n = C$ assuming the ideal gas behavior. Considering small hole throttling and gap flow phenomena in real-world flow characteristics, many studies developed small hole flow equations or gap flow models to investigate the flow characteristics of gas when the liquid flows through small channels, providing a theoretical basis for calculating suspension system output forces.
	
	For example, Guo et al. \cite{reference1} proposed a new pneumatic suspension system with independently adjustable stiffness and height utilizing double-acting cylinders and accumulators. Yin et al. \cite{reference2} designed a hydro-pneumatic suspension system with variable damping and stiffness, verifying its effectiveness under different excitation conditions through high-speed electromagnetic valves. Zhang et al. \cite{reference3} studied the reliability model of a hydro-pneumatic suspension system on sealing failure and gas leakage issues considering temperature effects. However, none of these studies addressed the direct relationship between gas pressure and tire dynamic loads, especially in multi-axle vehicles. Existing estimation methods often rely on complex suspension models and multi-sensor fusion, which may compromise real-time performance and introduce implementation challenges.
	
	To address these limitations, some researchers have decomposed the suspension system into three subsystems representing gas dynamics, oil throttling, and friction, and then integrated them into comprehensive mechanical models. For instance, Yu et al. \cite{reference4} developed a model incorporated fluid elasticity and high-frequency excitation effects to predict liquid pressure behavior. Cai et al. \cite{reference5} used a 1/4 vehicle model with genetic algorithm to optimize suspension parameters, indirectly verifying the sensitivity of tire dynamic loads to suspension output forces, especially in reducing body acceleration.
	
	To the best of our knowledge, no study to date has established a direct estimation model using hydro-pneumatic suspension gas pressure alone to deduce tire dynamic loads. Alessandro et al. \cite{reference6} proposed an indirect tire state inference method based on wheel speed, and Best et al. \cite{reference7} used tire geometry and pressure data to estimate simple tire dynamic load. Singh et al. \cite{reference8} explored tire dynamic load characteristics based on the extended Pacejka model, while Dallas et al. \cite{reference9} proposed a hierarchical adaptive nonlinear model predictive control method to maximize the utilization of tire dynamic loads.
	
	In the estimation of tire dynamic loads for unmanned vehicles, Liu Zhihao et al. \cite{reference10} developed a finite element model to simulate tire pressure distribution under static ground contact and steady-state rolling conditions, and further established a mathematical model between sinkage, load, and tire pressure to estimate tire dynamic loads. Xu et al. \cite{reference11} used machine learning to develop an intelligent tire system that predicts tire dynamic loads in real-time through accelerometer data and neural networks. Although this method has sufficient accuracy for engineering applications, it requires additional hardware support, increasing implementation complexity. Ma Xiao et al. \cite{reference12} proposed a multi-sensor fusion method that combines accelerometer sensors, tire pressure sensors, and other onboard sensors to obtain wheel dynamic loads in real-time through a Kalman filter algorithm. Although these methods improve tire dynamic load estimation accuracy, they typically rely on high-frequency sensors and complex computation, limiting their practical application in terms of real-time performance and cost.
	
	The previous studies have investigated the influence of tire dynamic loads on vehicle stability and various tire dynamic load estimation techniques based on extended Pacejka models and model predictive control methods. While directly developing the relationship between the gas chamber pressure of the hydro-pneumatic suspension system and tire dynamic loads remains a challenge, traditional methods require complex mathematical models and multi-sensor data fusion. The gas pressure measurement of the hydro-pneumatic suspension can simplify this process. However, the same gas chamber pressure may correspond to different suspension states, such as extension, compression, or varying vibration velocities. Therefore, it is crucial to accurately infer both the output and damping forces of the suspension system and correlate them with changes in wheel dynamic loads. To address this issue, this study proposes a direct method for estimating tire dynamic loads based solely on changes in gas chamber pressure. A detailed mechanical model of the suspension system of a three-axle, six-wheeled vehicle is developed and validated via experimental data. This model is further verified using co-simulation in TruckSim. The simulation results exhibit satisfactory agreement with the experimental data, suggesting the validity and applicability of the proposed method. This research provides a simplified, low-cost method for real-time estimation of tire dynamic loads, with significant potential for application in intelligent driving and unmanned driving systems, enhancing the real-time and accuracy of vehicle stability control strategies. This method avoids the complexity of multi-sensor fusion and reduces system implementation costs while ensuring estimation accuracy, with higher real-time performance and flexibility.

	\section{Suspension Output Force Calculation Based on Gas Pressure}
	
	To avoid symbol ambiguity in subsequent derivations, Table~\ref{tab:nomenclature_merged} uniformly defines the main variables used in this paper.
	
	\begin{longtable}{llll}
		\caption{System Symbol Definitions and Key Parameter Summary} 
		\label{tab:nomenclature_merged} \\
		
		\toprule
		\textbf{Symbol} & \textbf{Definition/Description} & \textbf{Value} & \textbf{Unit} \\
		\midrule
		\endfirsthead
		
		\multicolumn{4}{c}{{\bfseries \tablename\ \thetable{} -- Continued}} \\
		\toprule
		\textbf{Symbol} & \textbf{Definition/Description} & \textbf{Value} & \textbf{Unit} \\
		\midrule
		\endhead
		
		\bottomrule
		\multicolumn{4}{r}{{Continued on next page...}} \\
		\endfoot
		\bottomrule
		\endlastfoot

		\multicolumn{4}{l}{\textit{Part A: General State Variables and Fluid/Gas Properties (Bench \& Simulation)}} \\
		$t, \Delta t$ & Time variable / sampling period & - / $0.002778$ & s \\
		$P_1(t)$ & Main chamber gas pressure (measurable signal) & - & Pa \\
		$P_2(t)$ & Annular chamber hydraulic pressure (state variable) & - & Pa \\
		$h_{\text{total}}, v$ & Suspension total travel / Piston relative velocity & - & m, m/s \\
		$F_{\text{out}}$ & Suspension total output force & - & N \\
		$\rho$ & Hydraulic oil density & $850$ & kg/m$^3$ \\
		$\mu(T_0)$ & Hydraulic oil dynamic viscosity ($30^\circ$C / $50^\circ$C) & $0.065 / 0.032$ & Pa$\cdot$s \\
		$K_{\text{bulk}}$ & Hydraulic oil bulk modulus & $1.7 \times 10^{9}$ & Pa \\
		$\gamma$ & Gas adiabatic index & $1.4$ & - \\
		$n_{\text{eff}}$ & Effective polytropic index & - & - \\
		$P_{\text{atm}}$ & Standard atmospheric pressure & $1.013 \times 10^{5}$ & Pa \\
		\midrule

		\multicolumn{4}{l}{\textit{Part B: Bench Test Prototype Parameters (For Sections 3-4)}} \\
		$D_1$ & Main gas chamber cylinder diameter & $75$ & mm \\
		$D_{\text{piston}}$ & Piston outer diameter & $49$ & mm \\
		$D_2$ & Piston rod diameter & $45$ & mm \\
		$A_1$ & Main piston effective area & $4.418 \times 10^{-3}$ & m$^2$ \\
		$A_2$ & Piston rod cross-sectional area & $1.885 \times 10^{-3}$ & m$^2$ \\
		$A_3$ & Annular chamber effective area ($A_1 - A_2$) & $2.533 \times 10^{-3}$ & m$^2$ \\
		$D_{\text{ch}}$ & Throttle orifice diameter & $6.0$ & mm \\
		$D_{\text{check}}$ & Check valve diameter & $3.0$ & mm \\
		$h_{\text{gap}}$ & Piston-cylinder radial clearance & $0.5$ & mm \\
		$P_0, T_0$ & Initial charge pressure / Initial temperature & $0.8$ / $30, 50$ & MPa, $^\circ$C \\
		\midrule
		
		\multicolumn{4}{l}{\textit{Part C: Heavy Mining Truck Parameters (For Section 6)}} \\
		\multicolumn{4}{l}{\footnotesize \quad \textit{C.1 Mass and Tire Parameters}} \\
		$m_s$ & Sprung mass per wheel & $7500$ & kg \\
		$m_u$ & Unsprung mass per wheel & $800$ & kg \\
		$m_t$ & Tire mass & $500$ & kg \\
		$k_t, c_t$ & Tire vertical stiffness / damping & $2000$ / $4$ & kN/m, kNs/m \\
		$R_{\text{eff}}$ & Tire effective rolling radius & $0.8$ & m \\
		
		\multicolumn{4}{l}{\footnotesize \quad \textit{C.2 Large-Scale Hydro-Pneumatic Suspension}} \\
		$D_{\text{cyl}}$ & Truck suspension cylinder diameter & $250$ & mm \\
		$D_{\text{pis}}$ & Truck piston diameter & $249.5$ & mm \\
		$D_{\text{rod}}$ & Truck piston rod diameter & $210$ & mm \\
		$D_{\text{ch\_truck}}$ & Throttle orifice diameter & $8.0$ & mm \\
		$D_{\text{check\_truck}}$ & Check valve diameter & $8.0$ & mm \\
		$N_{\text{valve}}$ & Valve quantity (Check+Throttle) & $2$ & - \\
		$h_{\text{gap\_truck}}$ & Radial clearance (Truck) & $0.785$ & mm \\
		
		\multicolumn{4}{l}{\footnotesize \quad \textit{C.3 Double-Wishbone Kinematics}} \\
		$L_{\text{upper}}$ & Upper arm effective length & $580$ & mm \\
		$L_{\text{lower}}$ & Lower arm effective length & $650$ & mm \\
		$L_{\text{eff}}$ & Suspension force arm & $480$ & mm \\
		$\beta$ & Suspension axis inclination (Static) & $20$ & deg \\
		$\alpha_0$ & Lower arm installation angle (Static) & $8$ & deg \\
		$\bar{i}_{\text{sus}}$ & Average suspension ratio & $0.755$ & - \\
		
	\end{longtable}
	
	\subsection{Nonlinear Dynamic Modeling of the Suspension System}
	\label{sec:model}
	
	The hydro-pneumatic suspension system consists of a gas spring chamber and a hydraulic damping channel. As shown in Fig.~\ref{fig:suspension_structure}, the system includes a main chamber (gas chamber) and an annular chamber (hydraulic chamber), achieving the coupling effect of gas compression and fluid throttling through piston motion. In the compression stroke, the piston rod moves upward from the equilibrium position, causing the volume of Chamber I to decrease. This compression process increases the gas pressure in Chamber I, while the volume of Chamber II increases and the oil pressure decreases. Consequently, the pressure in Chamber I is higher than in Chamber II, prompting the check valve to open, allowing the oil in Chamber I to flow into Chamber II sequentially through the damping orifice, the check valve, and the piston-cylinder radial clearance. In the extension stroke, the piston rod moves downward from the equilibrium position, the volume of Chamber I increases, and gas pressure decreases; meanwhile, the volume of Chamber II decreases, and oil pressure increases. When the pressure in Chamber II exceeds that in Chamber I, the check valve closes, and oil can only flow from Chamber I to Chamber II through the damping orifice and the piston-cylinder radial clearance.
	
	\begin{figure}[htbp]
		\centering
		\includegraphics[width=0.6\textwidth]{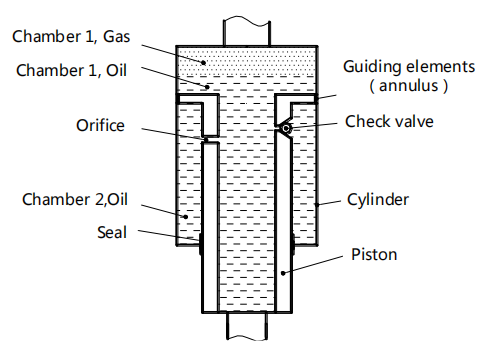}
		\caption{Schematic diagram of hydro-pneumatic suspension system structure}
		\label{fig:suspension_structure}
	\end{figure}
	
	The total output force of the system can be decomposed into three independent mechanical components:
	\begin{equation}
		F_{\text{out}}(t) = F_{\text{gas}}(t) + F_{\text{damp}}(t) + F_{\text{fric}}(t)
		\label{eq:force_decomposition}
	\end{equation}
	where $F_{\text{gas}}$ is the gas pressure (nonlinear stiffness force generated by gas compression), $F_{\text{damp}}$ is the fluid damping force (generated by hydraulic oil flowing through orifices and clearances), and $F_{\text{fric}}$ is the friction force (solid friction between piston and cylinder).
	
	\subsection{Gas Pressure}
	
	\subsubsection{Quasi-Dynamic Polytropic Process}
	
	The state change of gas during the compression-expansion process follows the polytropic process equation:
	\begin{equation}
		P_1 V_{\text{gas}}^{n_{\text{eff}}} = P_0 V_0^{n_{\text{eff}}} = \text{constant}
		\label{eq:polytropic}
	\end{equation}
	where $V_{\text{gas}}$ is the instantaneous gas volume, and $V_0$ is the initial gas volume.
	
	The effective polytropic index $n_{\text{eff}}$ reflects the degree of gas heat exchange. This paper adopts a frequency-based empirical model:
	\begin{equation}
		n_{\text{eff}}(\omega) = 1 + (\gamma - 1) \left[1 - \exp\left(-\frac{\omega}{\omega_c}\right)\right]
		\label{eq:n_eff_freq}
	\end{equation}
	
	At low frequencies ($\omega \ll \omega_c$), the gas has sufficient time for heat exchange, $n_{\text{eff}} \to 1$; at high frequencies ($\omega \gg \omega_c$), heat exchange cannot proceed in time, $n_{\text{eff}} \to \gamma$. Moreover, experimental observations (Section~\ref{sec:temp_stability}) show that the in-cylinder temperature change is extremely small ($\Delta T < 2$°C) during a single excitation process, so it can be considered that:
	\begin{equation}
		n_{\text{eff}}(\omega, T_0) \approx n_{\text{eff}}(\omega) \cdot \left[1 + \alpha_T (T_0 - T_{\text{ref}})\right]
		\label{eq:n_eff_temp}
	\end{equation}
	where $\alpha_T$ is the temperature correction coefficient (calibrated through multi-temperature experiments), and $T_{\text{ref}}$ is the reference temperature (taken as $25$°C).
	
	\subsubsection{Gas Pressure Calculation}
	
	Calculate the current gas volume from Eq.~(\ref{eq:polytropic}):
	\begin{equation}
		V_{\text{gas}}(t) = V_0 \left(\frac{P_0}{P_1(t)}\right)^{1/n_{\text{eff}}(t)}
		\label{eq:V_gas}
	\end{equation}
	
	The gas displacement is:
	\begin{equation}
		h_{\text{gas}}(t) = \frac{V_0 - V_{\text{gas}}(t)}{A_1}
		\label{eq:h_gas}
	\end{equation}
	
	Then the gas pressure force is:
	\begin{equation}
		F_{\text{gas}}(t) = \left[P_1(t) - P_{\text{atm}}\right] A_1 - \left[P_2(t) - P_{\text{atm}}\right] A_2
		\label{eq:F_gas}
	\end{equation}
	
	\subsection{Fluid Damping Force}
	
	\subsubsection{Damping Pressure Difference Composition}
	
	The damping pressure difference comes from the pressure difference between Chamber I and Chamber II. When the piston moves, hydraulic oil flows from the main chamber to the annular chamber (or vice versa), generating pressure drops at the throttle orifice, check valve, and piston clearance:
	\begin{equation}
		\Delta P(t) = P_1(t) - P_2(t) = \Delta P_{\text{visc}} + \Delta P_{\text{inert}} + \Delta P_{\text{orif}} + \Delta P_{\text{gap}}
		\label{eq:pressure_drop}
	\end{equation}
	where $\Delta P_{\text{visc}}$ is viscous pressure drop, $\Delta P_{\text{inert}}$ is inertial pressure drop, $\Delta P_{\text{orif}}$ is orifice throttling pressure drop, and $\Delta P_{\text{gap}}$ is clearance flow pressure drop.
	
	\subsubsection{Viscous Pressure Drop}
	
	The viscous pressure drop is generated by laminar damping, given by the Hagen-Poiseuille law:
	\begin{equation}
		\Delta P_{\text{visc}} = \frac{128 \mu(T_0) L_{\text{ch}} Q}{\pi D_{\text{ch}}^4}
		\label{eq:dp_viscous}
	\end{equation}
	where $L_{\text{ch}}$ is the effective length of the throttle channel.
	
	\subsubsection{Fluid Inertia Effect}
	
	The inertial pressure drop reflects the high-frequency effect of the suspension. Under high-frequency excitation, the acceleration of fluid mass requires additional force, generating inertial pressure drop:
	\begin{equation}
		\Delta P_{\text{inert}} = \rho L_{\text{ch}} \frac{dQ}{dt} / A_{\text{ch}}
		\label{eq:dp_inertial}
	\end{equation}
	This term is negligible in quasi-static processes but becomes one of the dominant factors as excitation frequency increases ($\omega > 3$~Hz), manifesting as the frequency-dependent hysteresis phenomenon observed in bench tests.
	
	\subsubsection{Orifice Throttling Pressure Drop}
	
	The throttling pressure drop when liquid flows through the orifice is given by the Bernoulli equation:
	\begin{equation}
		\Delta P_{\text{orif}} = K_{\text{orif}} \frac{\rho Q^2}{2 A_{\text{eff}}^2} \cdot \text{sign}(Q)
		\label{eq:dp_orifice}
	\end{equation}
	where $K_{\text{orif}}$ is the orifice resistance coefficient, and $A_{\text{eff}}$ is the effective flow area.
	
	The value of $A_{\text{eff}}$ differs between the suspension's compression and extension strokes, which originates from the action of the check valve. In compression stroke ($Q > 0$), the check valve opens, providing additional flow area; in extension stroke ($Q < 0$), the check valve closes, and only the throttle orifice provides flow. The effective areas are respectively:
	\begin{equation}
		A_{\text{eff}} = 
		\begin{cases}
			A_{\text{ch}} + A_{\text{check}}, & Q > 0 \quad \text{(Compression)} \\
			A_{\text{ch}}, & Q \leq 0 \quad \text{(Extension)}
		\end{cases}
		\label{eq:effective_area}
	\end{equation}
	
	\subsubsection{Clearance Flow Pressure Drop}
	
	There is an open wear ring between the piston and cylinder, and the piston-cylinder radial clearance generates a pressure drop:
	\begin{equation}
		\Delta P_{\text{gap}} = \frac{12 \mu(T_0) L_{\text{piston}} Q}{h_{\text{gap}}^3 \pi D_{\text{piston}}}
		\label{eq:dp_gap}
	\end{equation}
	where $L_{\text{piston}}$ is piston length, and $D_{\text{piston}}$ is piston diameter.
	
	\subsubsection{Total Damping Force Calculation}
	
	The annular chamber pressure is obtained by iterative solution:
	\begin{equation}
		P_2(t) = P_1(t) - \Delta P(t)
		\label{eq:P2}
	\end{equation}
	
	The total damping force is:
	\begin{equation}
		F_{\text{damp}}(t) = \Delta P(t) \cdot A_3
		\label{eq:F_damp}
	\end{equation}
	
	\subsection{Friction Force}
	
	The friction between piston and cylinder is described using the Stribeck model:
	\begin{equation}
		F_{\text{fric}}(v) = \left[F_{\text{coulomb}} + (F_{\text{static}} - F_{\text{coulomb}}) \exp\left(-\frac{|v|}{v_{\text{stribeck}}}\right)\right] \tanh(\beta_{\text{fric}} v) + k_v v
		\label{eq:friction}
	\end{equation}
	where $k_v$ is the viscous friction coefficient.
	
	\subsection{Suspension Kinematics Parameter Calculation}
	
	The slight compressibility of the liquid causes its volume change to be:
	\begin{equation}
		\Delta V_{\text{oil}} = \frac{V_{\text{oil},0}}{K_{\text{bulk}}} \Delta P(t)
		\label{eq:oil_compression}
	\end{equation}
	where $V_{\text{oil},0}$ is the initial volume of hydraulic oil.
	
	The total displacement consists of gas displacement and liquid displacement:
	\begin{equation}
		h_{\text{total}}(t) = h_{\text{gas}}(t) + \frac{\Delta V_{\text{gas}}(t) + \Delta V_{\text{oil}}(t)}{A_3}
		\label{eq:h_total}
	\end{equation}
	where $\Delta V_{\text{gas}}(t) = V_0 - V_{\text{gas}}(t)$.
	
	Velocity and acceleration are calculated by numerical differentiation:
	\begin{align}
		v(t) &= -\frac{h_{\text{total}}(t) - h_{\text{total}}(t-\Delta t)}{\Delta t} \label{eq:velocity} \\
		a(t) &= \frac{v(t) - v(t-\Delta t)}{\Delta t} \label{eq:acceleration}
	\end{align}
	where the velocity direction is positive for compression.
	
	\subsection{Complete Algorithm Flow}
	
	\begin{algorithm}[htbp]
		\color{black}
		\caption{Suspension Output Force Calculation Based on Gas Pressure}
		\label{alg:force_calculation}
		\begin{algorithmic}[1]
			\REQUIRE $P_1(t)$: Pressure sequence; $T_0$: Initial temperature
			\ENSURE $F_{\text{out}}(t)$: Suspension output force
			
			\Step{1. Frequency Estimation \& Polytropic Index}
			\STATE $f_{\text{peak}} \gets \text{PeakFrequency}(\text{FFT}(P_1(t)))$
			\STATE $\omega \gets 2\pi f_{\text{peak}}$
			\STATE $n_{\text{eff}} \gets 1 + 0.4(1 - e^{-\omega/12.6})(1 + \alpha_T(T_0 - 25))$
			
			\FOR{$t \in \{t_1, t_2, \ldots, t_N\}$}
			\Step{2. Gas Pressure Calculation}
			\STATE $V_{\text{gas}}(t) \gets V_0 (P_0 / P_1(t))^{1/n_{\text{eff}}}$
			\STATE $h_{\text{gas}}(t) \gets (V_0 - V_{\text{gas}}(t))/A_1$
			
			\Step{3. Velocity \& Flow Rate}
			\STATE $v(t) \gets \text{Diff}(h_{\text{gas}}(t))$
			\STATE $Q(t) \gets A_3 \cdot v(t)$
			
			\Step{4. Fluid Damping Force}
			\IF{$v(t) > 0$ (Compression)}
			\STATE $A_{\text{eff}} \gets A_{\text{ch}}$
			\STATE $\Delta P(t) \gets \frac{128\mu L_{\text{ch}}Q(t)}{\pi D_{\text{ch}}^4} + \rho L_{\text{ch}}\frac{dQ}{dt}/A_{\text{ch}} + K_{\text{orif}}\frac{\rho Q(t)^2}{2A_{\text{ch}}^2} + \frac{12\mu L_{\text{gap}}Q(t)}{2\pi r_p h_{\text{gap}}^3}$
			\ELSE
			\STATE $A_{\text{eff}} \gets A_{\text{ch}} + A_{\text{check}}$
			\STATE $\Delta P(t) \gets K_{\text{orif}}\frac{\rho Q(t)^2}{2A_{\text{eff}}^2} + \frac{12\mu L_{\text{gap}}Q(t)}{2\pi r_p h_{\text{gap}}^3}$
			\ENDIF
			
			\Step{5. Component Force Calculation}
			\STATE $P_2(t) \gets P_1(t) - \Delta P(t)$
			\STATE $F_{\text{gas}}(t) \gets (P_1(t) - P_{\text{atm}})A_1 - (P_2(t) - P_{\text{atm}})A_2$
			\STATE $F_{\text{damp}}(t) \gets \Delta P(t) \cdot A_3$
			\STATE $F_{\text{fric}}(t) \gets [F_{\text{coulomb}} + (F_{\text{static}} - F_{\text{coulomb}})e^{-(v(t)/v_{\text{stribeck}})^2}] \cdot \tanh(\beta_{\text{fric}}v(t))$
			
			\Step{6. Total Output Force}
			\STATE $F_{\text{out}}(t) \gets F_{\text{gas}}(t) + F_{\text{damp}}(t) + F_{\text{fric}}(t)$
			\ENDFOR
			
			\RETURN $F_{\text{out}}(t)$
		\end{algorithmic}
	\end{algorithm}
	
	Figure~\ref{fig:pneumatic suspension working processing} illustrates the schematic workflow of the hydro-pneumatic suspension system, delineating the piston positions at three consecutive discrete time steps: $t_{i}$, $t_{i-1}$, and $t_{i-2}$. These positions correspond to the instantaneous compression or extension states of the suspension, serving as the fundamental kinematic inputs for the subsequent force estimation. Building upon this physical process, Algorithm~\ref{alg:force_calculation} presents the comprehensive computational framework for estimating the suspension output force based solely on gas pressure signals. This algorithm integrates a real-time frequency estimation module to dynamically adjust the polytropic index, ensuring accurate force calculation across varying excitation frequencies.
	
	\begin{figure}[htbp]
		\centering
		\includegraphics[width=0.75\textwidth]{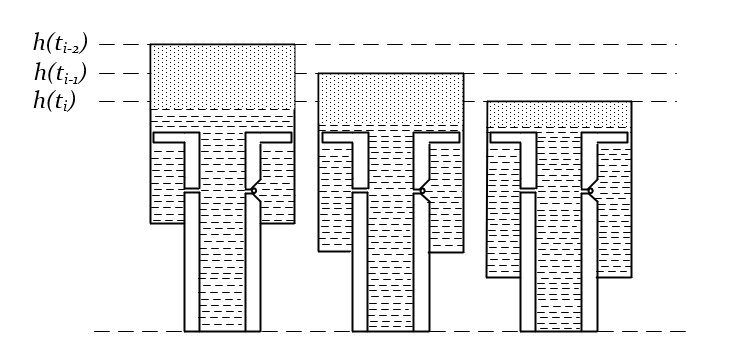}
		\caption{Working process diagram of hydro-pneumatic suspension}
		\label{fig:pneumatic suspension working processing}
	\end{figure}
	
	\section{Bench Test Validation}
	\label{sec:bench_test}
	
	\subsection{Experimental System Configuration and Test Conditions}
	
	Experiments were conducted on an MTS hydraulic servo test bench. The test system includes an MTS hydraulic excitation system (maximum excitation force $\pm 50$~kN, frequency range $0.1$-$20$~Hz), force sensor (accuracy $\pm 0.5\%$~FS), pressure sensor (accuracy $\pm 0.1\%$~FS), displacement sensor (LVDT type, accuracy $\pm 0.01$~mm), and temperature sensor (K-type thermocouple, accuracy $\pm 0.5$°C).
	
	Figure~\ref{fig:suspension-test-rig} shows the layout of the test bench. As shown, the top of the cylinder is fixed to a crossbeam, and the vibrator applies excitation to the bottom of the piston. A load sensor is installed between the crossbeam and the cylinder to measure the suspension output force. Pressure sensor 1 is installed at the bottom of the piston to measure the oil pressure in Chamber I (gas chamber), which also represents the gas pressure. Pressure sensor 2 is installed in the annular chamber to measure the oil pressure in Chamber II. In addition, the vibrator system is equipped with displacement and velocity sensors to achieve comprehensive dynamic data acquisition. The experiment collected at least 20 complete cycles to eliminate initial transient effects, and the data used for analysis were continuous 2-3 cycles in the steady-state phase. All sensor signals used a Butterworth low-pass filter with a 50Hz cutoff frequency to eliminate high-frequency measurement noise, but retained the main dynamic characteristics of the system. The slight drift of trajectories after multiple cycles mainly stems from: (i) minute cumulative changes in hydraulic oil temperature; (ii) slight non-repeatability of the gas polytropic process. This drift amplitude is extremely small and does not affect the validity of model verification.
	
	\begin{figure}[htbp]
		\centering
		\includegraphics[width=0.9\textwidth]{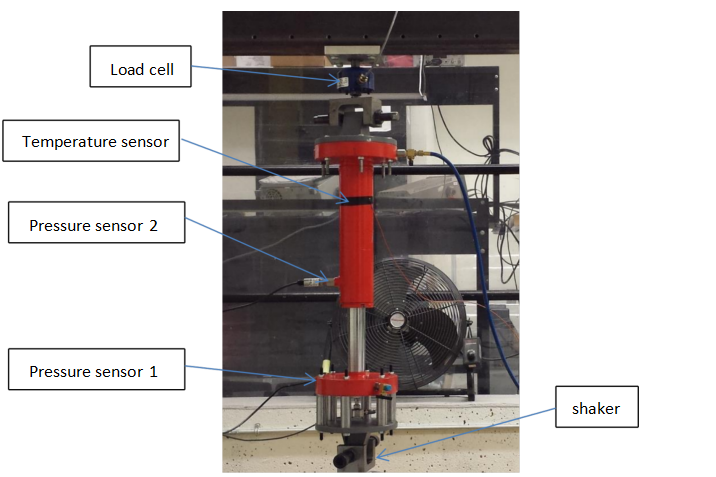}
		\caption{Suspension on test rig}
		\label{fig:suspension-test-rig}
	\end{figure}
	
	Table~\ref{tab:nomenclature_merged} summarizes the parameters of the suspension test piece. To systematically evaluate the fidelity of the model under different thermodynamic states and dynamic frequency domains, a full factorial test matrix containing two temperature levels ($30^\circ$C and $50^\circ$C) was designed. Under each temperature condition, four sinusoidal displacement excitations were applied respectively: low frequency band ($3$~Hz and $5$~Hz) maintained a constant amplitude of $7.50$~mm; high frequency band amplitude decreased with increasing frequency to adapt to the dynamic performance limit of the bench, with amplitudes for $7$~Hz and $8$~Hz set to $5.36$~mm and $4.69$~mm respectively. The above combinations formed 8 sets of steady-state test conditions, effectively covering the typical working range from quasi-static to high-frequency inertia-dominated.
	
	To verify the accuracy of the calculation model under different temperatures and frequencies, systematic test conditions were designed. The rationality of using sinusoidal displacement excitation lies in: (1) Sinusoidal signals are standard inputs for linear and nonlinear system identification, allowing systematic investigation of frequency response characteristics; (2) Although actual road excitation is a random signal, it can be decomposed into the superposition of different frequency components, and the frequency domain characteristics of the suspension system can be extrapolated through harmonic response. Bench tests were performed under two temperatures ($30$°C and $50$°C) and four frequencies ($3$, $5$, $7$, $8$~Hz), obtaining $2 \times 4 = 8$ sets of complete data. For each set of data, three types of characteristic curves can be plotted: output force-gas pressure, velocity-gas pressure, displacement-gas pressure.
	
	\subsection{Temperature Monitoring and Stability}
	\label{sec:temp_stability}
	
	To verify the assumption of "temperature basically unchanged during operation," the in-cylinder temperature was synchronously collected and recorded in each experiment. Figure~\ref{fig:temp_monitoring} shows the temperature changes under all working conditions. It can be seen from the figure that under all experimental conditions, the in-cylinder temperature fluctuation was $< 1.6$°C (peak-to-peak), and the mean drift was $< 1$°C, which can be regarded as almost constant temperature during a long enough period. This indicates that heat exchange is insufficient to significantly change the system temperature, so it is reasonable to treat the initial temperature $T_0$ as a constant. The temperature effect is mainly reflected in the comparison of output characteristics under different initial temperature settings.
	
	\begin{figure}[htbp]
		\centering
		\subfigure[Cylinder temperature variation at initial 30°C]{
			\includegraphics[width=0.48\textwidth]{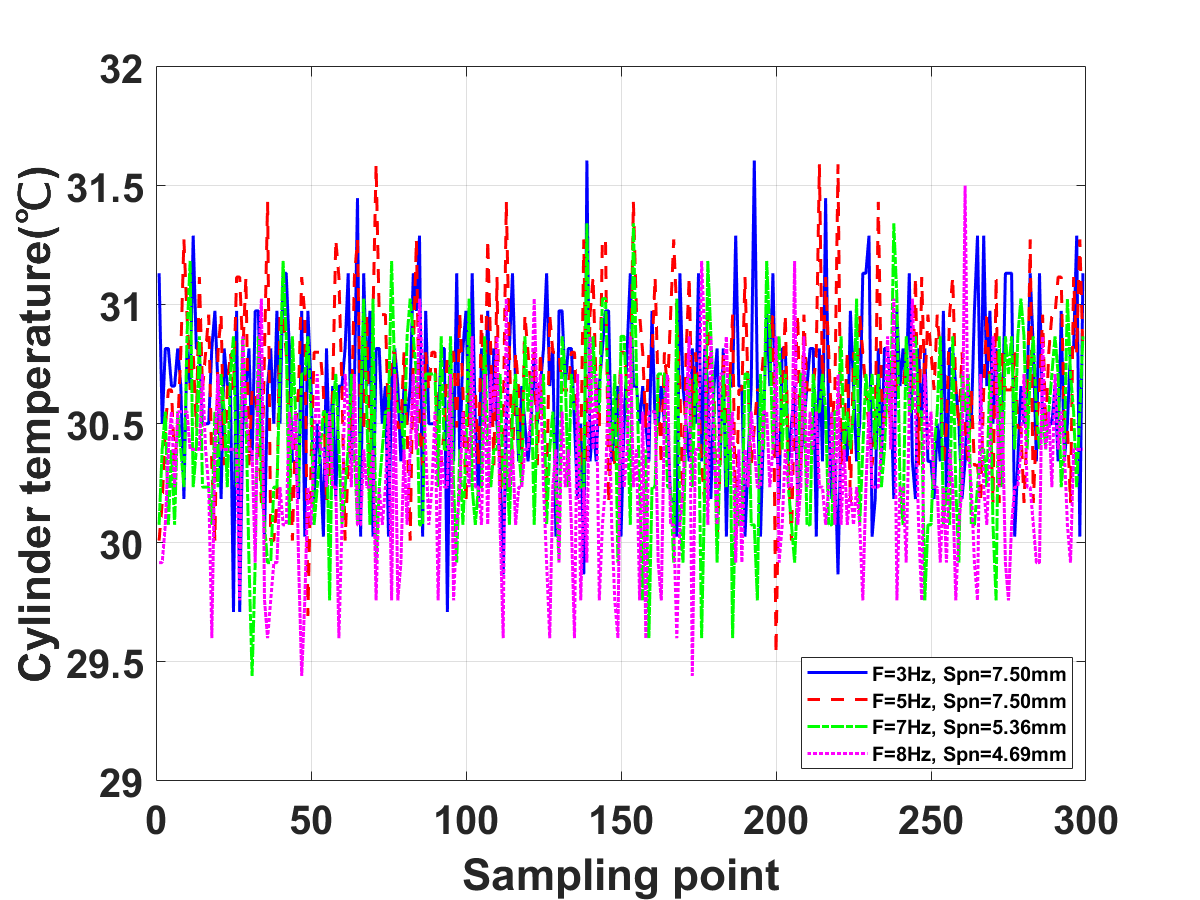}
			\label{fig:Temperature_Fluctuate_30degC}
		}
		\hfill
		\subfigure[Cylinder temperature variation at initial 50°C]{
			\includegraphics[width=0.48\textwidth]{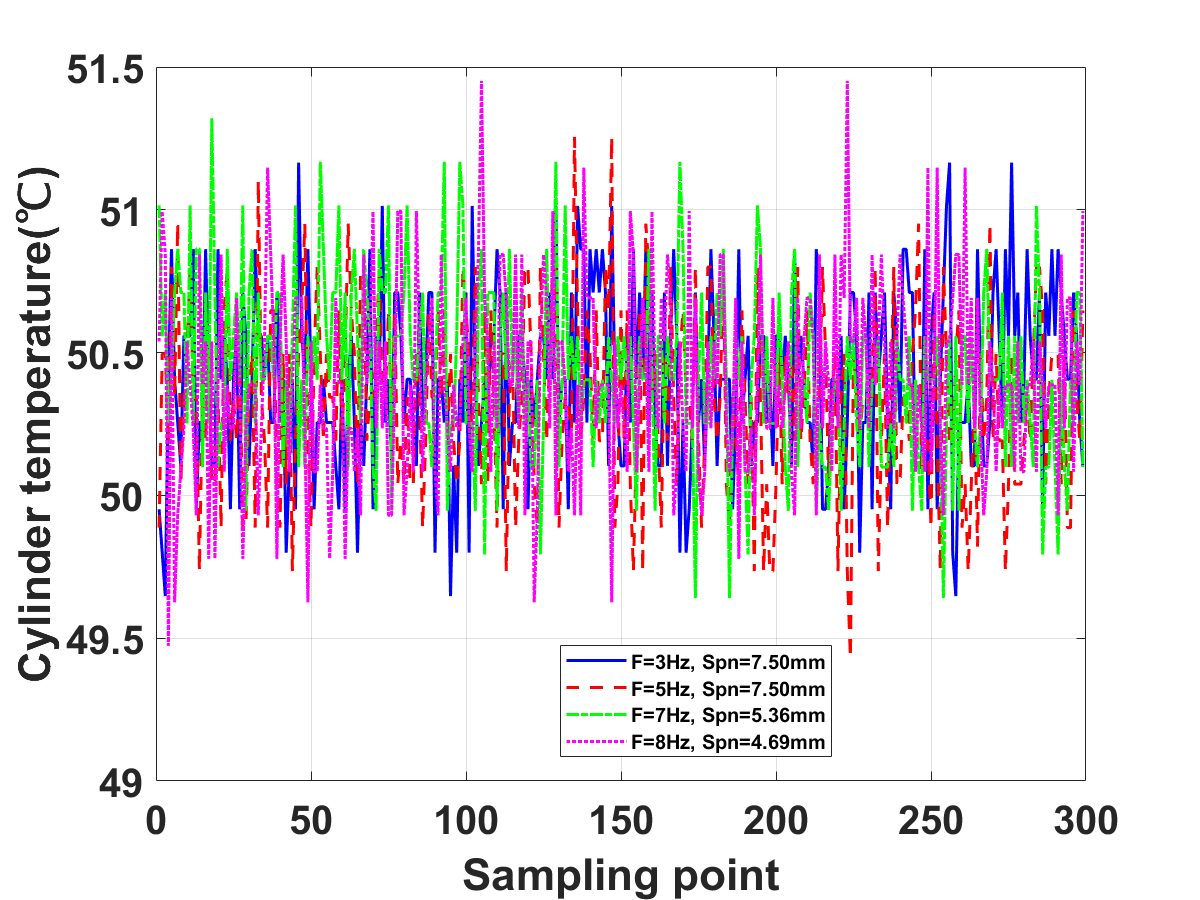}
			\label{fig:Temperature_Fluctuate_50degC}
		}
		\caption{Cylinder temperature variation under all experimental conditions}
		\label{fig:temp_monitoring}
	\end{figure}
	
	\subsection{Model Output Results Validation}
	
	\subsubsection{Output Force-Gas Pressure}
	
	Figure~\ref{fig:force_comparison} shows the comparison of suspension output force-gas pressure relationships at four frequencies (3, 5, 7, 8~Hz) under 30°C and 50°C. In the figure, blue "+" are experimental measurement values, and red "--" are algorithm estimation values.
	
	\begin{figure}[htbp]
		\centering
		\subfigure[30°C, 3~Hz]{
			\includegraphics[width=0.23\textwidth]{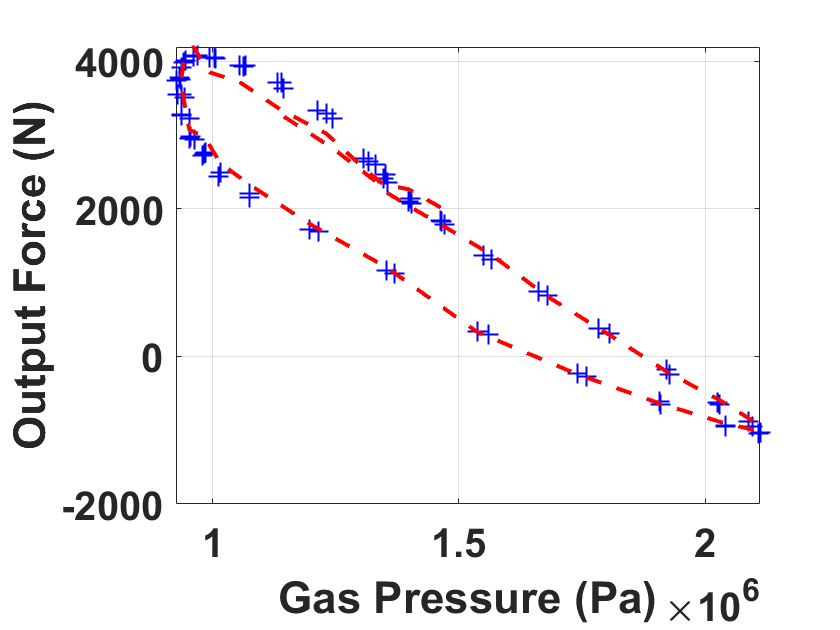}
		}
		\hfill
		\subfigure[30°C, 5~Hz]{
			\includegraphics[width=0.23\textwidth]{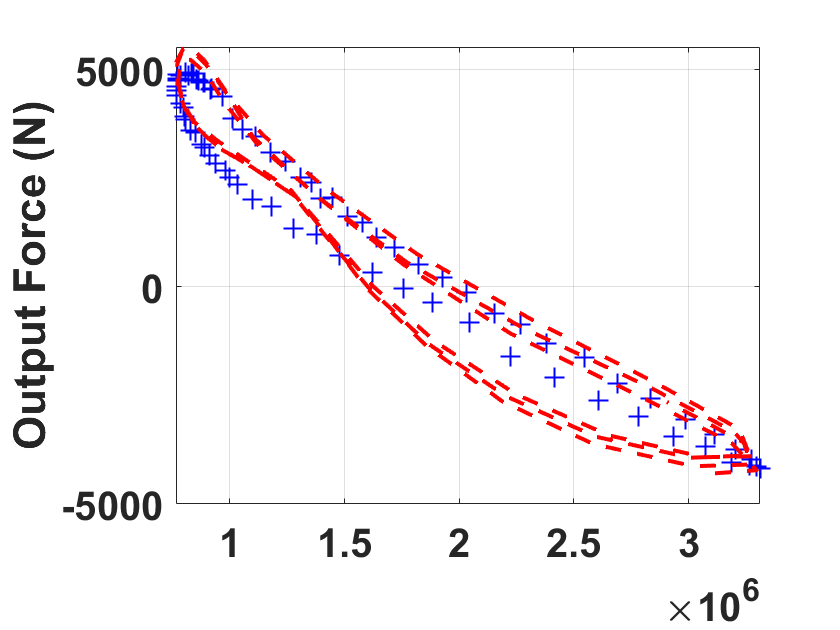}
		}
		\hfill
		\subfigure[30°C, 7~Hz]{
			\includegraphics[width=0.23\textwidth]{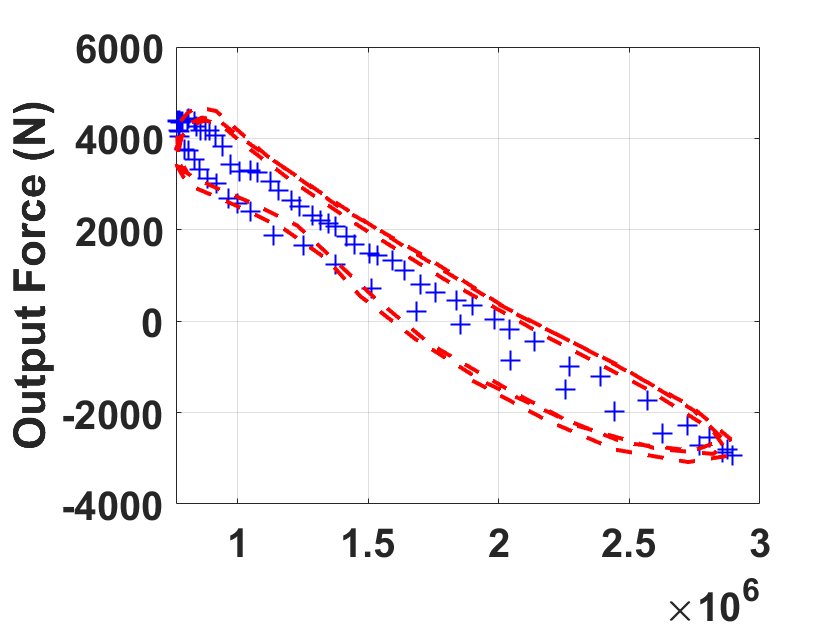}
		}
		\hfill
		\subfigure[30°C, 8~Hz]{
			\includegraphics[width=0.23\textwidth]{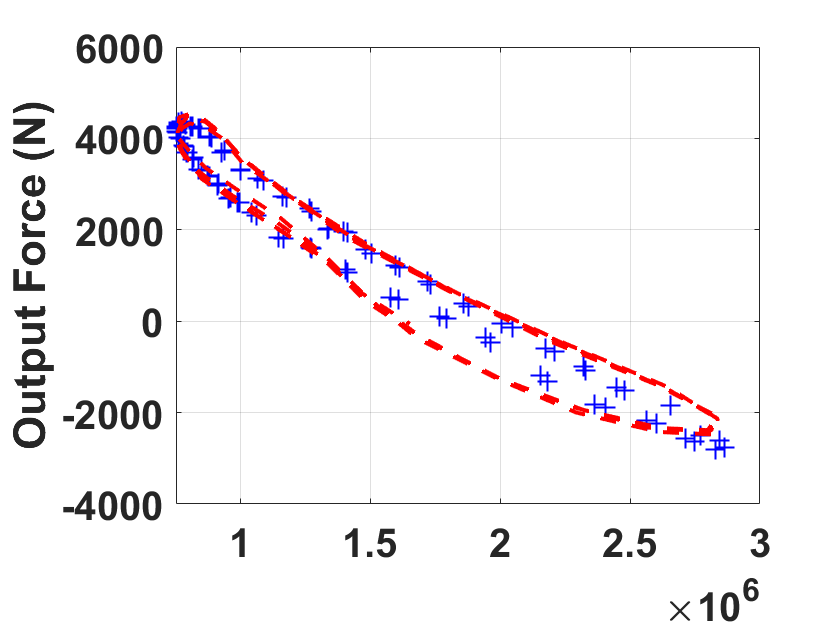}
		}
		\vfill
		\subfigure[50°C, 3~Hz]{
			\includegraphics[width=0.23\textwidth]{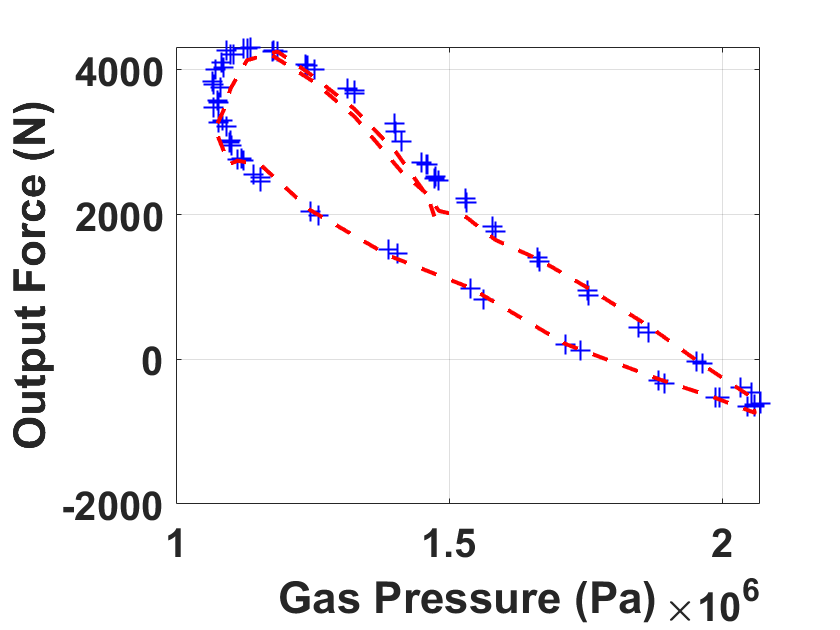}
		}
		\hfill
		\subfigure[50°C, 5~Hz]{
			\includegraphics[width=0.23\textwidth]{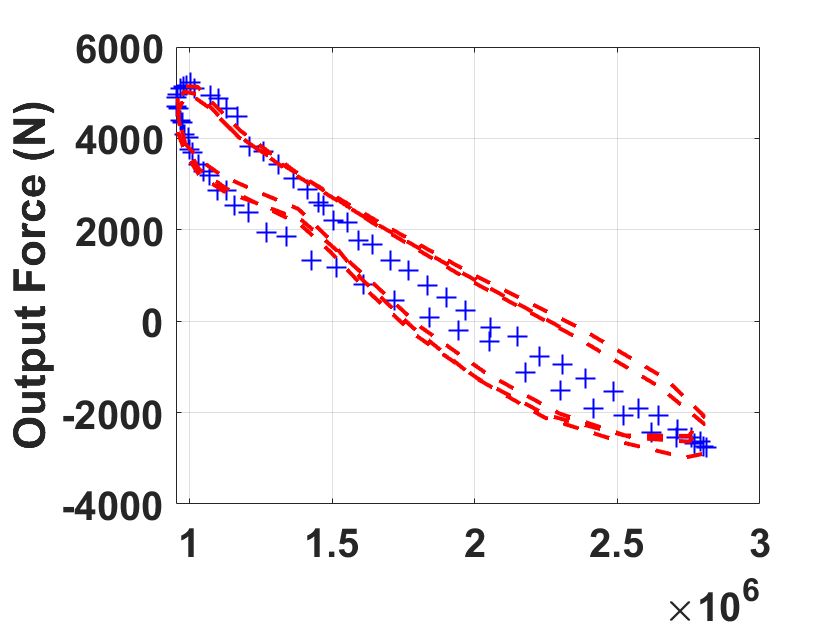}
		}
		\hfill
		\subfigure[50°C, 7~Hz]{
			\includegraphics[width=0.23\textwidth]{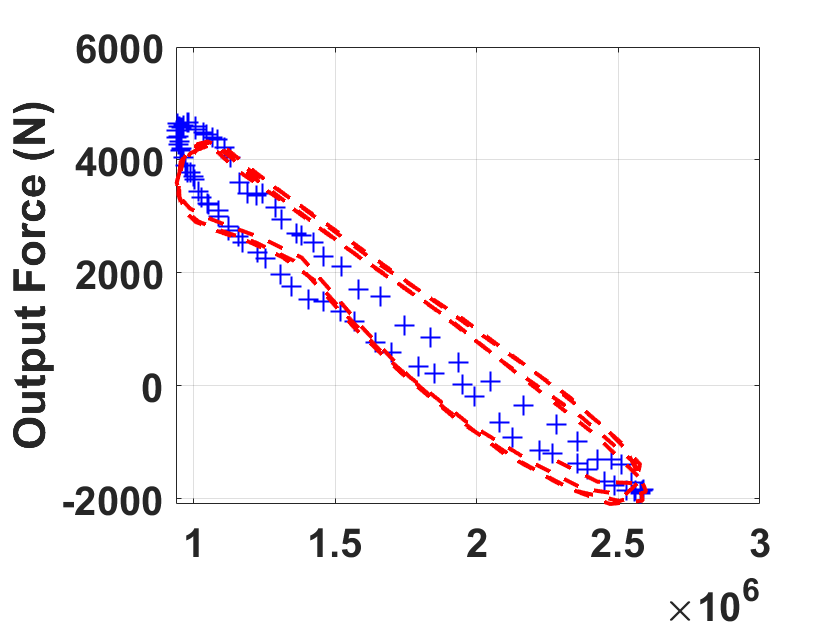}
		}
		\hfill
		\subfigure[50°C, 8~Hz]{
			\includegraphics[width=0.23\textwidth]{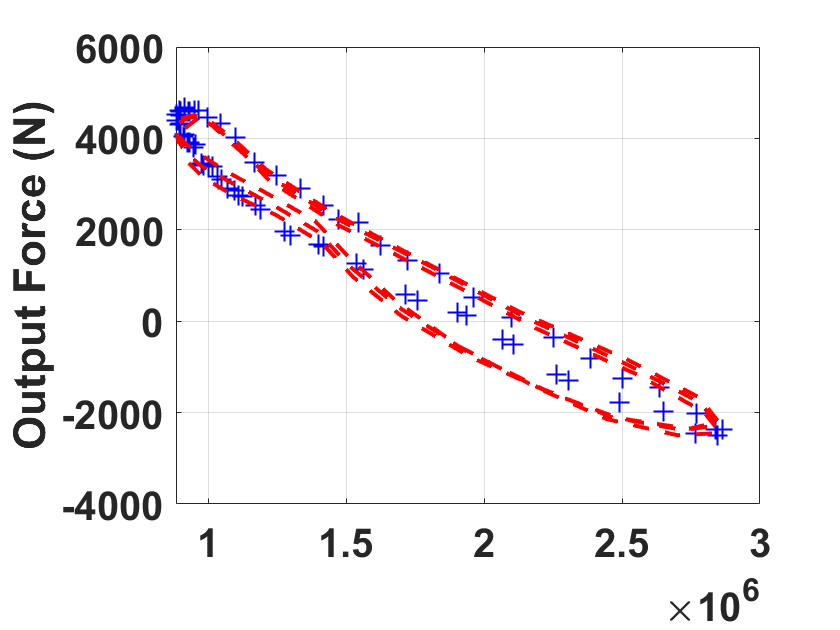}
		}
		\caption{Suspension output force-pressure relationship at 30°C and 50°C initial temperatures}
		\label{fig:force_comparison}
	\end{figure}
	
	From the image observation, it can be seen that the red estimated curve and the blue experimental data show good consistency in the compression stroke (lower branch) and extension stroke (upper branch) under all working conditions, and conform to the introduction expectation of inertial pressure drop (Eq.~\ref{eq:dp_inertial}), both showing hysteresis, presenting the following characteristics:
	
	(i) Low frequency condition (3~Hz): Figures~\ref{fig:force_comparison}(a) and (e) show that the blue points and red lines almost completely coincide, and the deviation is mainly concentrated in the extension stroke near the pressure extremum region ($1.0 \times 10^6 < P < 1.3 \times 10^6$~Pa), with a maximum deviation of about $\pm 200$~N, and the hysteresis effect begins to appear.
	
	(ii) Medium frequency condition (5~Hz): Figures~\ref{fig:force_comparison}(b) and (f) show that the compression stroke and extension stroke curves are slightly close. Comparing 30°C and 50°C, the enclosed area of the latter is smaller (reduced by about 10\%), reflecting the weakening of damping force caused by viscosity reduction ($\mu(50^\circ\text{C}) / \mu(30^\circ\text{C}) \approx 0.49$, Table~1). The estimated curve fits best in the medium and low pressure region ($1.0$-$1.7 \times 10^6$~Pa), with deviation $< 150$~N.
	
	(iii) High frequency condition (7-8~Hz): Figures~\ref{fig:force_comparison}(c)(d) and (g)(h) show that the closed curve formed by the blue experimental points is tighter and the enclosed area is smaller, and the red estimated curve captures this change very well. Although the dispersion of data points at 50°C is slightly larger than that at 30°C (standard deviation increases by about 5\%), the overall fitting degree of the curve is still high.
	
	(iv) Temperature effect: Comparing the upper and lower rows of Figure~\ref{fig:force_comparison}, under the same frequency and pressure, the output force at 50°C is almost unchanged, with a very small increase at high frequency. Under 8~Hz condition, the peak force at 30°C is about 4000~N (Figure~\ref{fig:force_comparison}(d), $P \approx 1.0 \times 10^6$~Pa), and it increases to about 4100~N at 50°C (Figure~\ref{fig:force_comparison}(h), same pressure), an increase of about 2.5\%, which is consistent with the theoretical expectation of reduced viscous damping but enhanced inertial pressure drop.
	
	\subsubsection{Velocity-Gas Pressure and Position-Gas Pressure}
	
	Figures~\ref{fig:phase_plots_30degC} and \ref{fig:phase_plots_50degC} respectively show the velocity-gas pressure and position-gas pressure phase diagrams of four frequency (3, 5, 7, 8~Hz) conditions under 30°C and 50°C. In the figure, blue "+" are experimental measurement values, and red "--" are algorithm estimation values.
	
	\begin{figure}[htbp]
		\centering
		\subfigure[Velocity-Pressure (3~Hz)]{
			\includegraphics[width=0.23\textwidth]{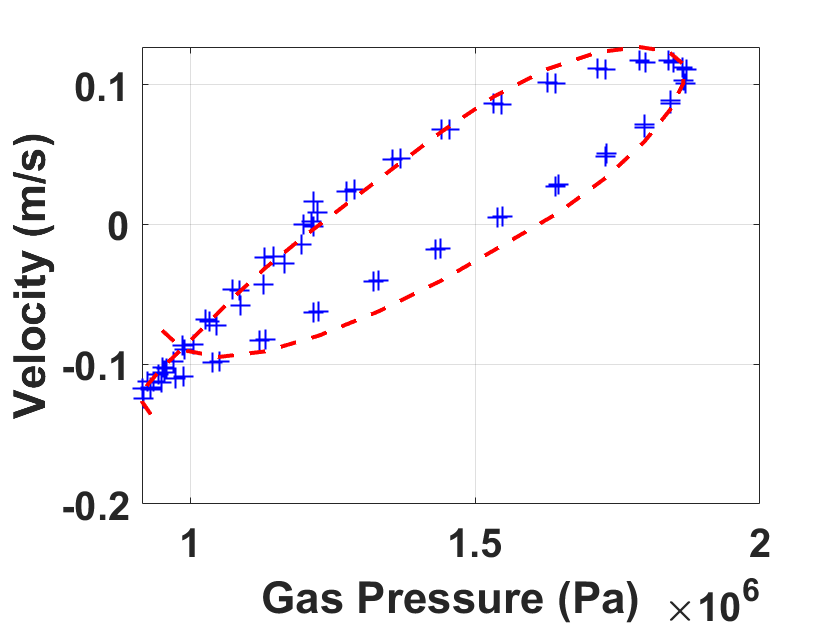}
		}
		\hfill
		\subfigure[Velocity-Pressure (5~Hz)]{
			\includegraphics[width=0.23\textwidth]{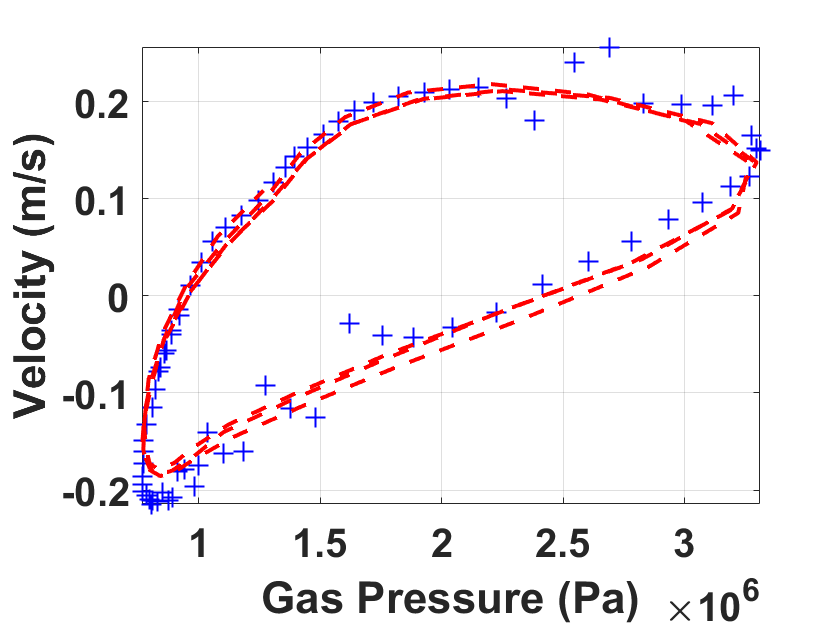}
		}
		\hfill
		\subfigure[Velocity-Pressure (7~Hz)]{
			\includegraphics[width=0.23\textwidth]{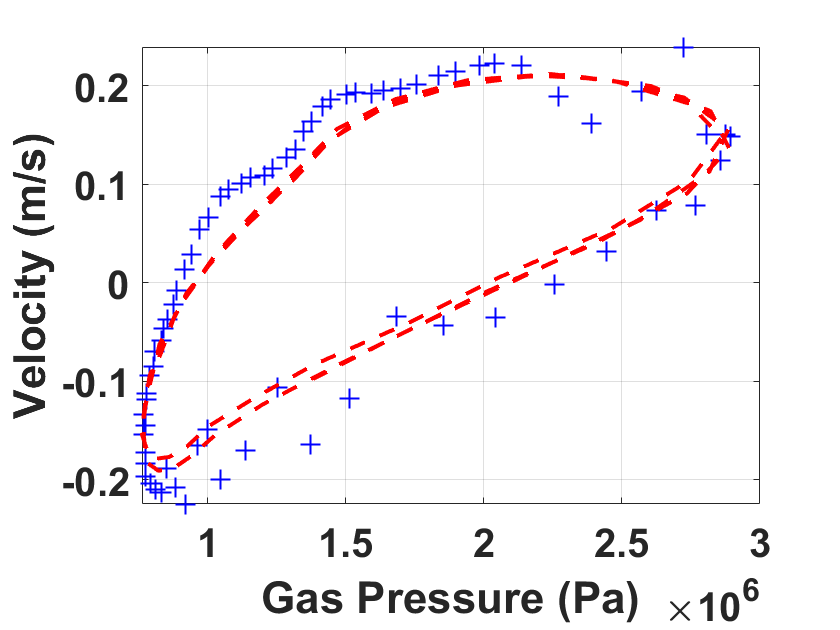}
		}
		\hfill
		\subfigure[Velocity-Pressure (8~Hz)]{
			\includegraphics[width=0.23\textwidth]{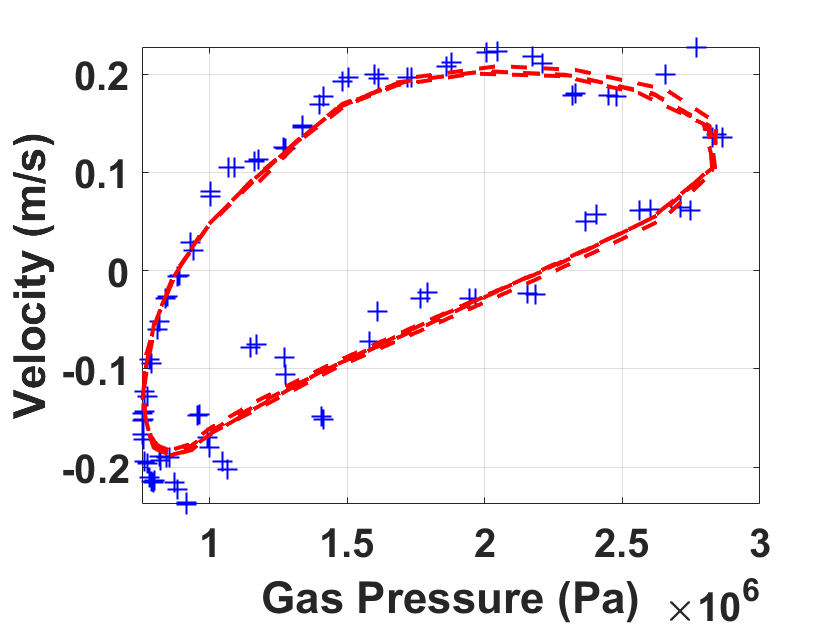}
		}
		\vfill
		\subfigure[Position-Pressure (3~Hz)]{
			\includegraphics[width=0.23\textwidth]{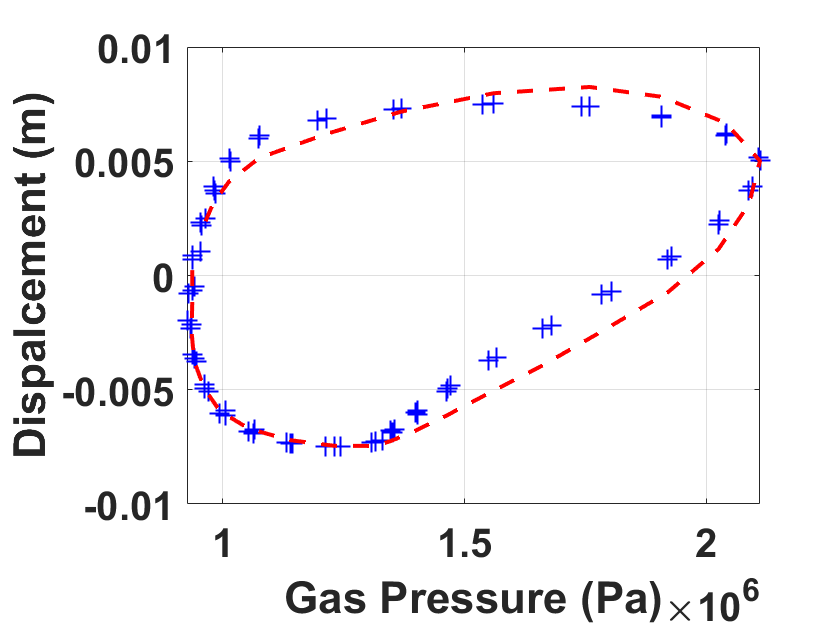}
		}
		\hfill
		\subfigure[Position-Pressure (5~Hz)]{
			\includegraphics[width=0.23\textwidth]{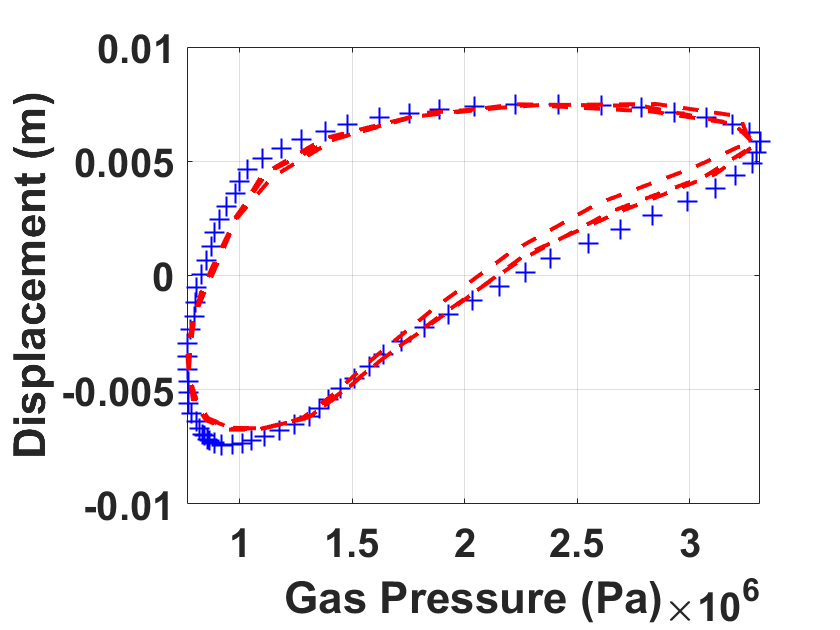}
		}
		\hfill
		\subfigure[Position-Pressure (7~Hz)]{
			\includegraphics[width=0.23\textwidth]{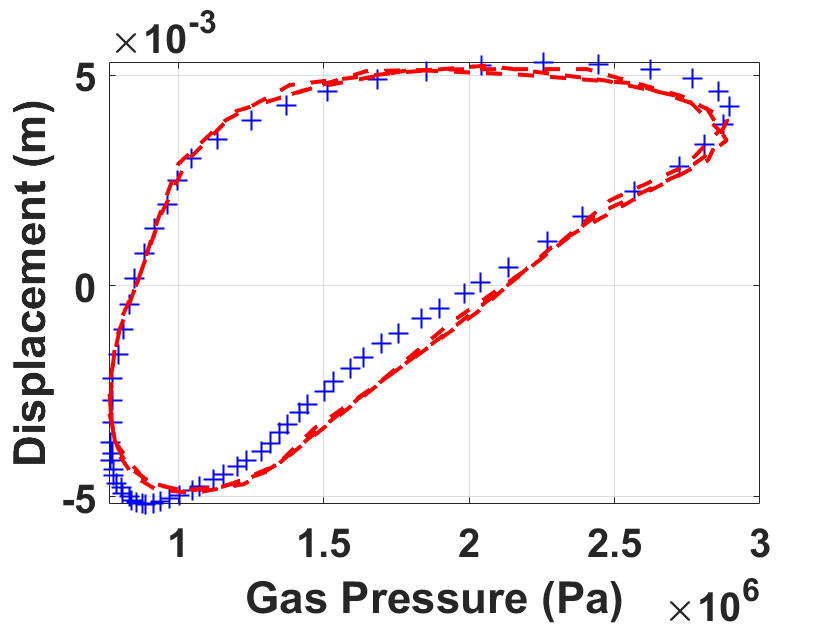}
		}
		\hfill
		\subfigure[Position-Pressure (8~Hz)]{
			\includegraphics[width=0.23\textwidth]{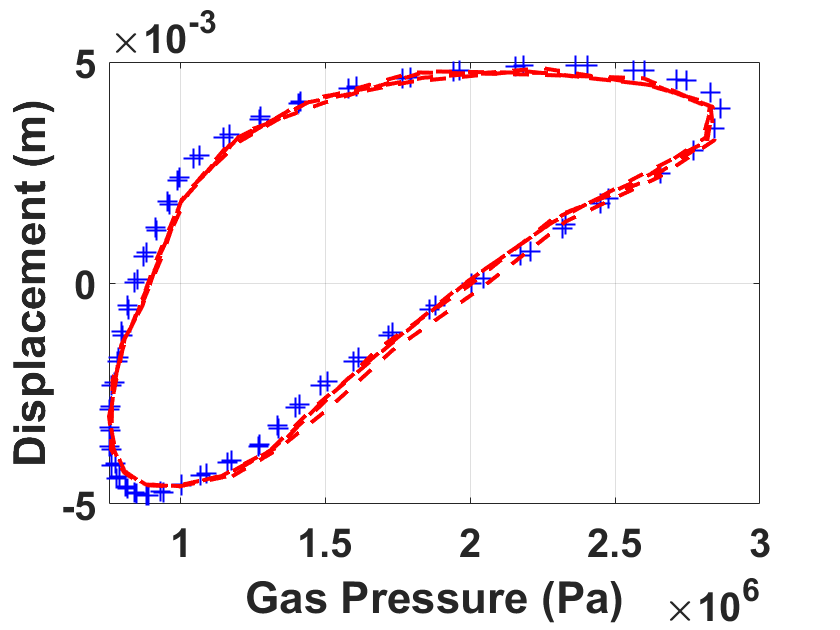}
		}
		\caption{Velocity-pressure and position-pressure relationships at 30°C initial temperature}
		\label{fig:phase_plots_30degC}
	\end{figure}
	
	\begin{figure}[htbp]
		\centering
		\subfigure[Velocity-Pressure (3~Hz)]{
			\includegraphics[width=0.23\textwidth]{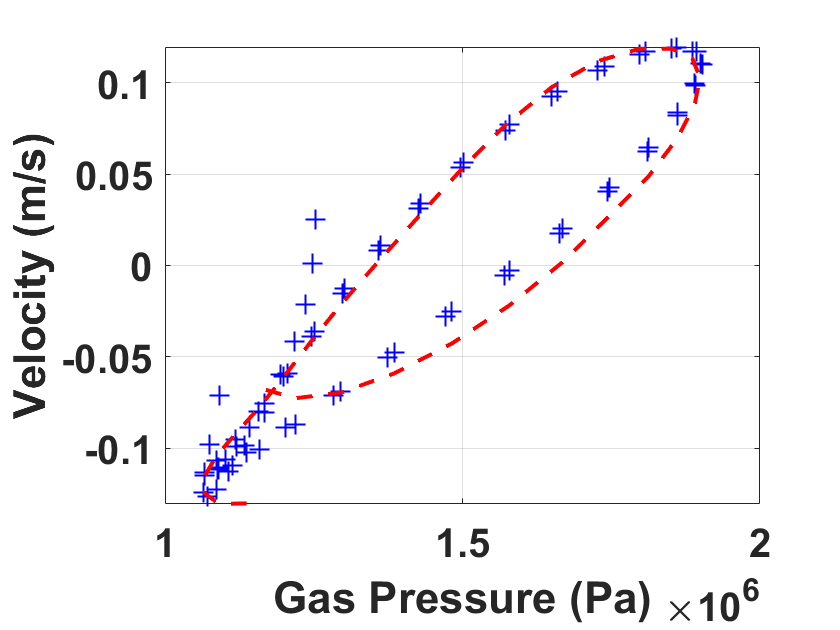}
		}
		\hfill
		\subfigure[Velocity-Pressure (5~Hz)]{
			\includegraphics[width=0.23\textwidth]{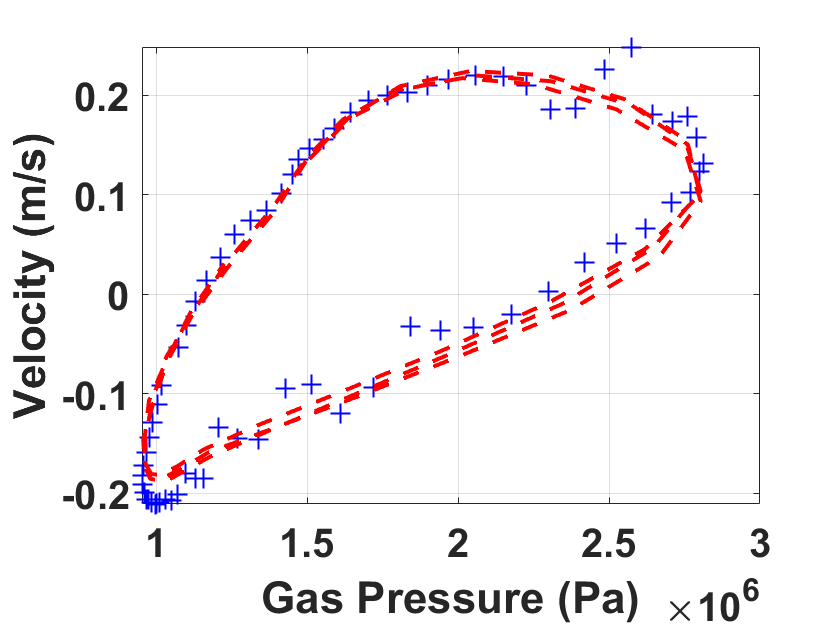}
		}
		\hfill
		\subfigure[Velocity-Pressure (7~Hz)]{
			\includegraphics[width=0.23\textwidth]{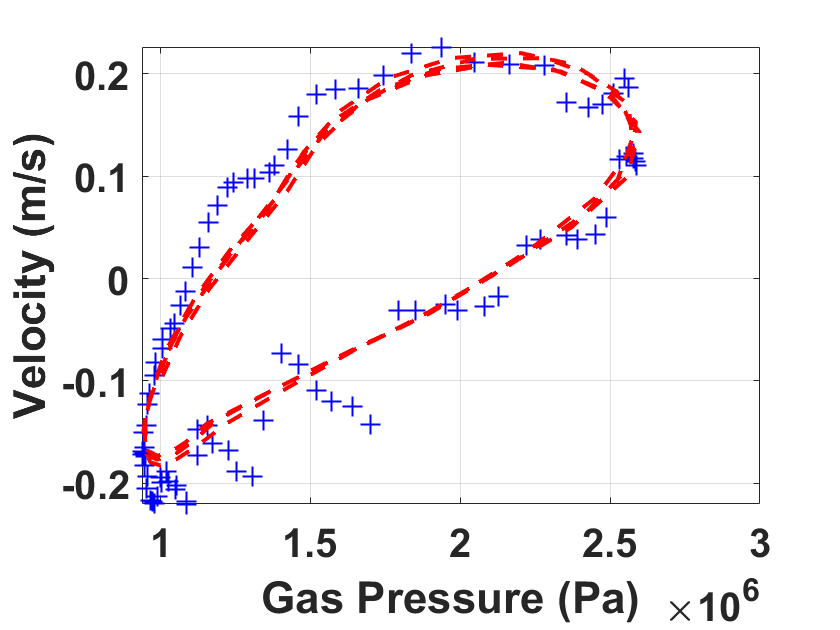}
		}
		\hfill
		\subfigure[Velocity-Pressure (8~Hz)]{
			\includegraphics[width=0.23\textwidth]{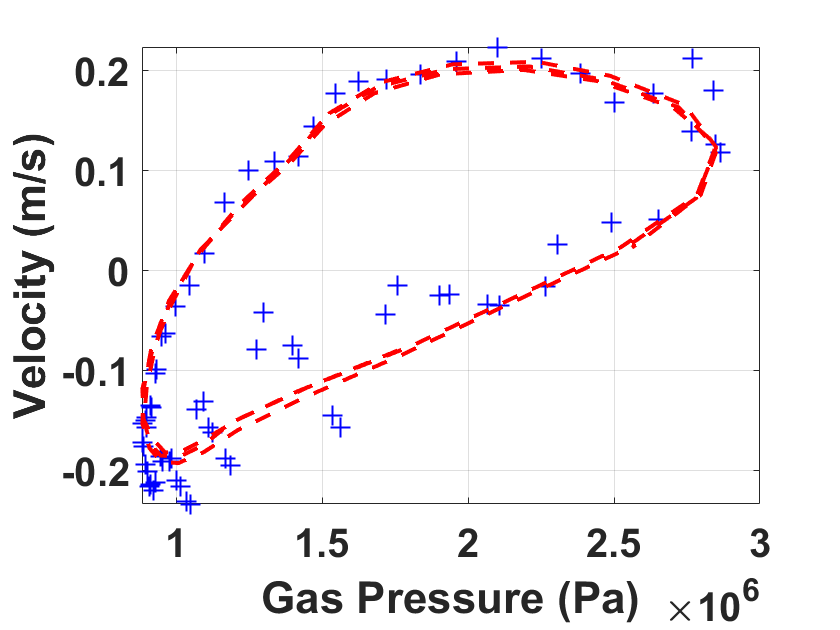}
		}
		\vfill
		\subfigure[Position-Pressure (3~Hz)]{
			\includegraphics[width=0.23\textwidth]{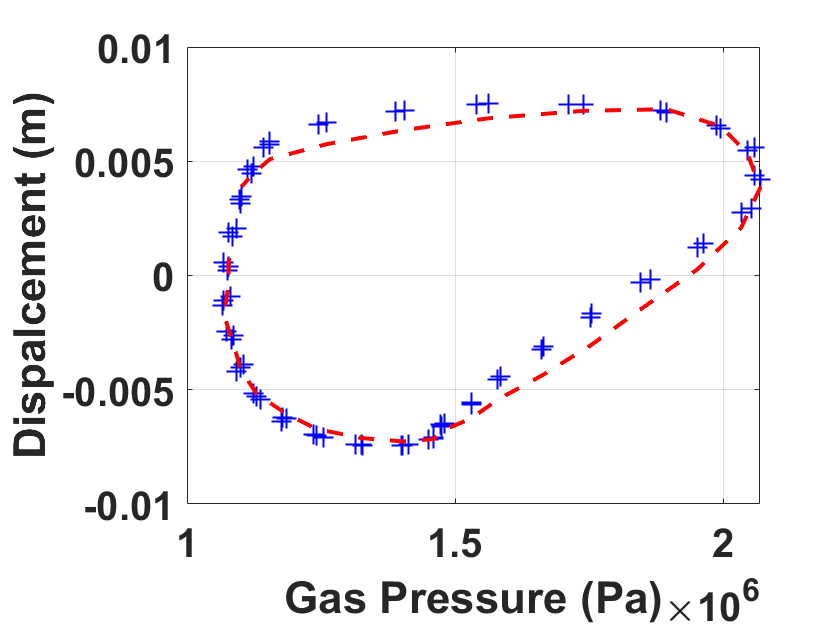}
		}
		\hfill
		\subfigure[Position-Pressure (5~Hz)]{
			\includegraphics[width=0.23\textwidth]{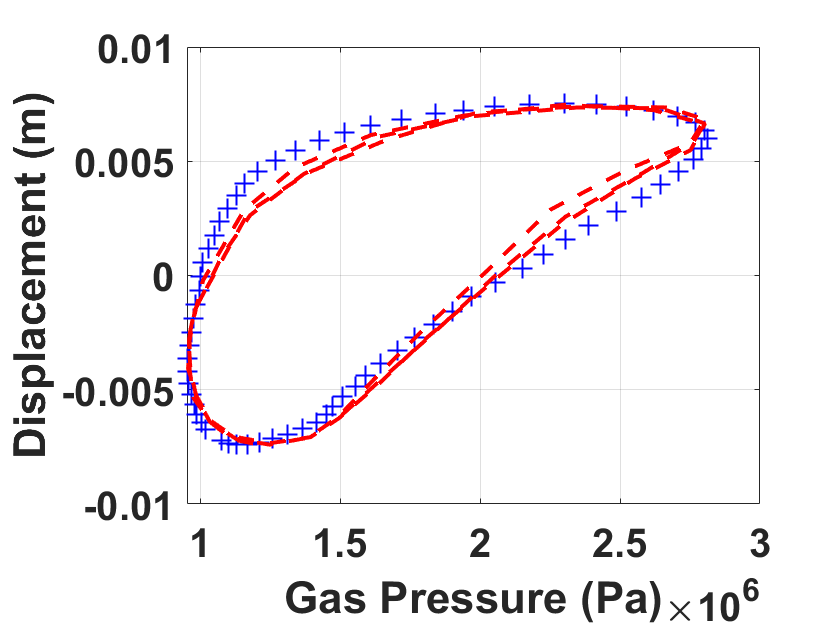}
		}
		\hfill
		\subfigure[Position-Pressure (7~Hz)]{
			\includegraphics[width=0.23\textwidth]{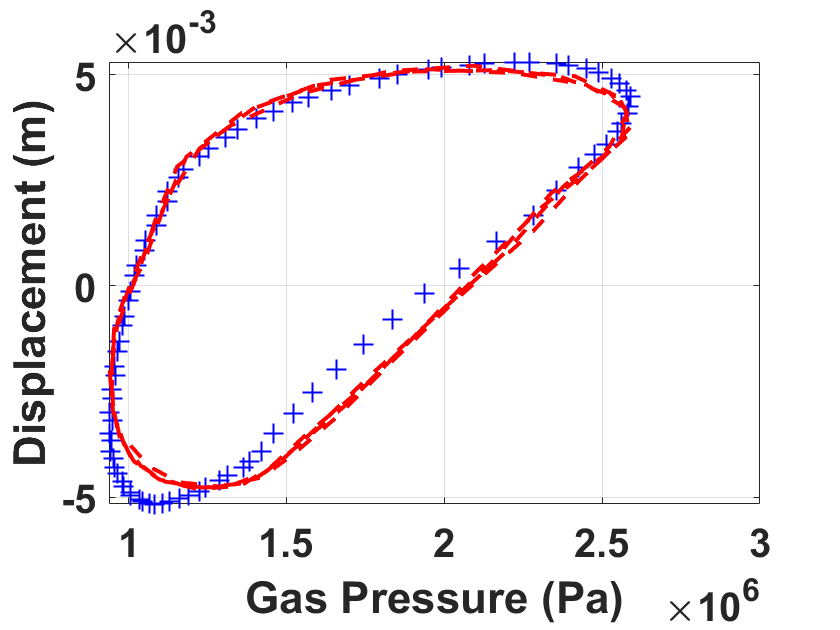}
		}
		\hfill
		\subfigure[Position-Pressure (8~Hz)]{
			\includegraphics[width=0.23\textwidth]{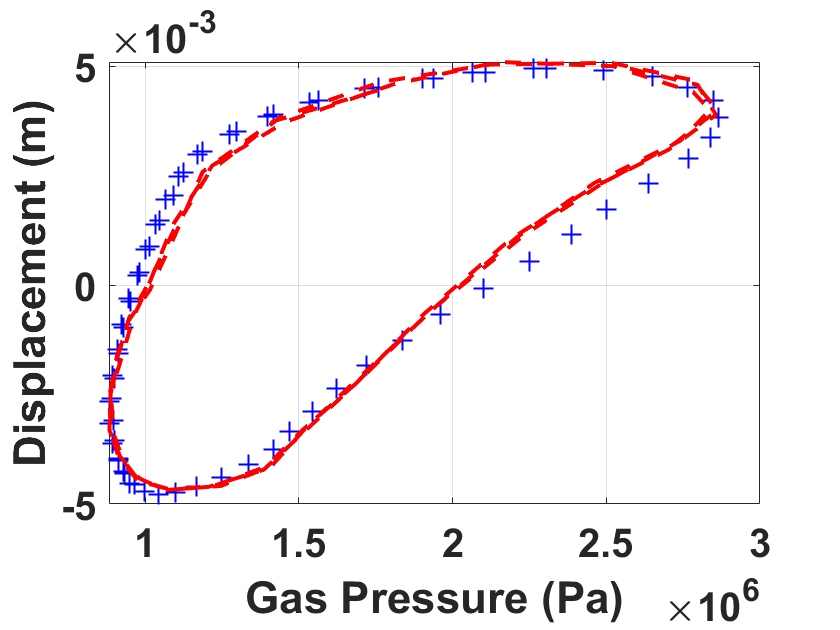}
		}
		\caption{Velocity-pressure and position-pressure relationships at 50°C initial temperature}
		\label{fig:phase_plots_50degC}
	\end{figure}
	
	Velocity-Gas Pressure Diagram Analysis:
	
	Under 3~Hz low frequency condition, Figures~\ref{fig:phase_plots_30degC}(a) and \ref{fig:phase_plots_50degC}(a) show that the velocity-pressure trajectory presents a relatively symmetrical narrow ellipse in both compression and extension strokes. The red estimated curve coincides highly with the blue experimental points, and the deviation is mainly concentrated in the compression stroke (about $-0.01$~m/s). The elliptical shape reflects that although the system has hysteresis response, it is still relatively close to a quasi-static process.
	
	As frequency increases to 8~Hz (Figures~\ref{fig:phase_plots_30degC}(d) and \ref{fig:phase_plots_50degC}(d)), the trajectory obviously widens and presents an asymmetrical "half-moon shape". This morphological change directly reflects the significant enhancement of inertial pressure drop $\Delta P_{\text{inert}} = \rho L_{\text{ch}} (dQ/dt) / A_{\text{ch}}$—fluid acceleration at high frequency requires additional pressure difference to overcome inertial resistance, leading to a larger velocity range corresponding to the same pressure. Comparing Figure~\ref{fig:phase_plots_30degC}(d) and Figure~\ref{fig:phase_plots_50degC}(d), the "half-moon area" at 50°C shrinks by about 10\%, reflecting that after viscous damping decreases ($\mu$ halves), the system damping ratio decreases and dynamic response accelerates.
	
	Position-Gas Pressure Diagram Analysis:
	
	Under 3~Hz condition, Figures~\ref{fig:phase_plots_30degC}(e) and \ref{fig:phase_plots_50degC}(e) show that the position-pressure relationship is approximately a monotonic function, and the trajectory is distributed along an overall rounded curve. The red estimated curve closely follows the blue experimental points, with a maximum deviation of about $\pm 0.5$~mm (appearing before the compression stroke displacement extremum $-0.008$~m). This feature verifies the accuracy of the polytropic process equation (Eq.~\ref{eq:polytropic}) under quasi-static assumption—position is mainly determined by instantaneous pressure, $h_{\text{gas}} \propto (P_0 / P_1)^{1/n_{\text{eff}}}$.
	
	Entering 8~Hz high frequency condition (Figures~\ref{fig:phase_plots_30degC}(h) and \ref{fig:phase_plots_50degC}(h)), the image starts to elongate at the maximum gas pressure and a sharp tip appears. The compression stroke and extension stroke are close, with a maximum width of about $0.01$~m. In addition to the slight reduction of external excitation amplitude, this phenomenon mainly stems from the frequency dependence of the effective polytropic index $n_{\text{eff}}$ (Eq.~3)—at high frequency $n_{\text{eff}} \to \gamma = 1.4$ (adiabatic process), the gas volume corresponding to the same pressure is smaller, resulting in a reduced displacement difference between different strokes, which also conforms to the expectation of weakened damping asymmetry after viscosity reduction.
	
	Table~\ref{tab:error_metrics} quantitatively summarizes the error indicators of all working conditions. The RMSE of 8 working conditions are all $< 220$~N (relative error $< 3.8\%$), coefficient of determination $R^2 > 0.965$, validating the accuracy of the model under wide frequency domain and dual temperature conditions.
	
	\begin{table}[htbp]
		\centering
		\caption{Estimation error statistics of the model for output force, velocity and displacement under different conditions}
		\label{tab:error_metrics}
		\scriptsize
		\setlength{\tabcolsep}{3pt}
		\renewcommand{\arraystretch}{1.3}
		\begin{tabular}{ccccccccc}
			\toprule
			\multirow{2}{*}{\textbf{Temp}} & \multirow{2}{*}{\textbf{Freq}} & \multicolumn{3}{c}{\textbf{Suspension Force} ($F_{\text{out}}$)} & \multicolumn{2}{c}{\textbf{Piston Velocity} ($v$)} & \multicolumn{2}{c}{\textbf{Suspension Travel} ($h_{\text{total}}$)} \\
			\cmidrule(lr){3-5} \cmidrule(lr){6-7} \cmidrule(lr){8-9}
			($^\circ$C) & (Hz) & RMSE (N) & $R^2$ & \textbf{Max Err (N)} & RMSE (mm/s) & $R^2$ & RMSE (mm) & $R^2$ \\
			\midrule
			\multirow{4}{*}{30} 
			& 3 & 115 & 0.985 & 240 & 3.8 & 0.978 & 0.21 & 0.988 \\
			& 5 & 218 & 0.981 & 455 & 4.5 & 0.975 & 0.28 & 0.985 \\
			& 7 & 150 & 0.976 & 310 & 5.1 & 0.971 & 0.33 & 0.982 \\
			& 8 & 178 & 0.973 & 300 & 5.4 & 0.969 & 0.36 & 0.979 \\
			\midrule
			\multirow{4}{*}{50} 
			& 3 & 128 & 0.982 & 270 & 4.1 & 0.976 & 0.24 & 0.986 \\
			& 5 & 205 & 0.978 & 395 & 4.9 & 0.972 & 0.31 & 0.983 \\
			& 7 & 195 & 0.972 & 325 & 5.6 & 0.968 & 0.38 & 0.978 \\
			& 8 & 218 & 0.968 & 360 & 6.2 & 0.965 & 0.41 & 0.975 \\
			\bottomrule
		\end{tabular}
	\end{table}
	
	\section{Lookup Table Estimation Method for ECUs}
	\label{sec:lookup_table}
	
	The bench test verified the accuracy of the iterative algorithm for "pressure $\to$ suspension output force". However, vehicle chassis controllers usually use embedded ECUs with limited computing power and memory resources. The complete physical model in Section~\ref{sec:model} (Algorithm~\ref{alg:force_calculation}) although high in accuracy, the computational complexity of a single iteration is too heavy for the chassis controller and cannot meet the high-frequency real-time control requirements. Based on the above difficulties, this paper proposes a lookup table estimation algorithm based on pressure sequence, the essence of which is to establish a nonlinear mapping between input (pressure sequence) and output (suspension force).
	
	\subsection{Pressure-Velocity Uniqueness Mapping Theorem}
	
	Traditional lookup table methods often require multiple input variables, leading to high lookup table dimensions, large storage overhead, and low interpolation efficiency. This section proves through rigorous mathematical derivation that under the condition of fixed sampling interval, gas chamber pressure and its difference can uniquely determine the system motion state, providing a theoretical basis for the subsequent construction of two-dimensional lookup tables.
	
	\begin{theorem}[Pressure-Velocity Uniqueness Mapping Theorem]
		\label{thm:pressure_velocity_mapping}
		Let a generic pneumatic-hydraulic system enclosed by a piston (such as a hydro-pneumatic suspension) satisfy the following conditions:
		
		First, the gas state follows the polytropic process equation
		\begin{equation}
			P_1(t) V_{\text{gas}}(t)^{n_{\text{eff}}} = P_0 V_0^{n_{\text{eff}}}
		\end{equation}
		where $P_0$, $V_0$ are initial pressure and volume, and $n_{\text{eff}}$ is the effective polytropic index.
		
		Second, the sampling interval $\Delta t$ is fixed and known.
		
		Third, system geometric parameters $A_1$, $V_0$ are known, and gas chamber pressure changes monotonically within the working range $P_{\min} < P(i) < P_{\max}$.
		
		Then there exists a unique bijective mapping
		\begin{equation}
			\phi: (P(i), \Delta P(i)) \mapsto v(i)
			\label{eq:bijective_mapping}
		\end{equation}
		such that the piston instantaneous velocity $v(i)$ is uniquely determined by the gas chamber pressure $P(i)$ and its adjacent moment difference $\Delta P(i) = P(i) - P(i-1)$, where $v(i)$ is defined as positive compression velocity.
	\end{theorem}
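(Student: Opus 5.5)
The plan is to write the velocity explicitly as a function of $P(i)$ and $P(i-1)=P(i)-\Delta P(i)$ by composing the closed-form maps already derived in Section~\ref{sec:model}. Then we show that this composite map is well defined, single valued, and injective on the working range.

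\textbf{Step 1.} Invert the polytropic law, exactly as in Eq.~(\ref{eq:V_gas}), to get
\[
V_{\text{gas}}(i)=V_0\,(P_0/P(i))^{1/n_{\text{eff}}}.
\]
Because $P(i)>P_{\min}>0$ and $n_{\text{eff}}\in[1,\gamma]$ is fixed (estimated once by Algorithm~\ref{alg:force_calculation}), the map $P\mapsto V_{\text{gas}}$ is smooth and strictly decreasing. Hence
\[
h_{\text{gas}}(i)=\bigl(V_0-V_{\text{gas}}(i)\bigr)/A_1
\]
is a strictly increasing function $g(P(i))$ of pressure alone. Using the total-travel relation Eq.~(\ref{eq:h_total}) instead would add the oil-compressibility term. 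I would either neglect it, since $V_{\text{oil},0}/K_{\text{bulk}}$ is tiny, or absorb it into $g$ while keeping monotonicity, arguing that its derivative is dominated by the gas term.

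\textbf{Step 2.} Apply the backward difference Eq.~(\ref{eq:velocity}) with the fixed known $\Delta t$:
\[
v(i)=\frac{g(P(i))-g(P(i)-\Delta P(i))}{\Delta t}
=\frac{V_0P_0^{1/n}}{A_1\Delta t}\Bigl[(P(i)-\Delta P(i))^{-1/n}-P(i)^{-1/n}\Bigr],
\]
possibly up to the sign convention of the paper. This gives an explicit function
\[
\phi(P,\Delta P)=\frac{g(P)-g(P-\Delta P)}{\Delta t},
\]
which depends only on $(P,\Delta P)$ and the known constants $A_1,V_0,P_0,n_{\text{eff}},\Delta t$. Existence and uniqueness of $v(i)$ given $(P(i),\Delta P(i))$ follow immediately, because $\phi$ is a genuine single-valued function. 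The domain is
\[
D=\{(P,\Delta P): P_{\min}<P<P_{\max},\ P_{\min}<P-\Delta P<P_{\max}\}.
\]

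\textbf{Step 3.} The main obstacle is the word ``bijective''. A map $D\subset\mathbb R^2\to\mathbb R$ cannot be injective, so literal bijectivity is false. I would make the claim precise in one of two ways. The first is to show that for each fixed $P$ the section $\Delta P\mapsto\phi(P,\Delta P)$ is strictly increasing, since $\partial_{\Delta P}\phi=g'(P-\Delta P)/\Delta t>0$, and is therefore a bijection onto its image; then $\Delta P$ can be recovered from $(P,v)$, which is what justifies a two-dimensional lookup table. The second is to state bijectivity for the lifted map $(P,\Delta P)\mapsto(P,v)$, whose Jacobian determinant is $g'(P-\Delta P)/\Delta t>0$ and which is injective by the first point. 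The monotonicity hypothesis in the statement is used here: it guarantees that $g$ is strictly monotone on the whole range, so that $g'>0$ holds throughout. Finally I would note the sign bookkeeping: $\Delta P>0$ if and only if $v>0$, i.e.\ compression. This is consistent with the check-valve branch in Eq.~(\ref{eq:effective_area}), so the later damping computation is also single valued on $D$.
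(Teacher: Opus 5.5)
You arrive at an explicit formula much as the paper does, but by a different route, and your treatment of ``bijective'' corrects the paper's argument rather than reproducing it.

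\textbf{Derivation of $\phi$.} The paper differentiates the polytropic law in continuous time, uses $\dot V_{\text{gas}}=A_1v$, and only then replaces $\dot P$ by $\Delta P/\Delta t$. Its $\phi$ in Eq.~(\ref{eq:explicit_mapping}) is therefore linear in $\Delta P$: up to sign it is $g'(P)\,\Delta P/\Delta t$ with $g'(P)=V_{\text{gas}}(P)/(n_{\text{eff}}PA_1)$. That is the first-order truncation of your secant $[g(P)-g(P-\Delta P)]/\Delta t$.

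Your $\phi$ is exactly the backward difference of $h_{\text{gas}}$ taken in Step~3 of Algorithm~\ref{alg:force_calculation}, i.e.\ the quantity that actually populates the lookup table. The paper's $\phi$, by contrast, in general equals neither that nor the instantaneous velocity $g'(P)\dot P$. The paper's linear form buys a trivial closed-form inverse in $\Delta P$, but your $\partial_{\Delta P}\phi=g'(P-\Delta P)/\Delta t>0$ is just as easy.

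Your sign also needs no hedge. With compression-positive $v$ the gas volume shrinks, so the paper's $\dot V_{\text{gas}}=A_1v$ should carry a minus sign; as written, its formula gives $v<0$ when $\Delta P>0$.

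\textbf{Bijectivity.} The paper argues injectivity in two cases. Case~2 (fixed $P$, strict monotonicity in $\Delta P$) is valid and is precisely your first repair. Case~1 is not valid:
\begin{itemize}
\item its argument only covers $\Delta P_1=\Delta P_2$;
\item it fails even there at $\Delta P=0$;
\item it never treats pairs that differ in both coordinates, which is exactly where the level curves of $\phi$ lie.
\end{itemize}
Your dimension-count objection is right for continuous maps, which $\phi$ is. A concrete witness you can cite is $\phi(P,0)=0$ for every $P$. Single-valuedness is all the lookup table needs, and your lifted map $(P,\Delta P)\mapsto(P,v)$ is the correct form of the invertibility claim.

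\textbf{Side remarks to tighten.} None of these affects the main argument.
\begin{itemize}
\item Absorbing the oil-compressibility term into $g$ does not work. $\Delta V_{\text{oil}}$ in Eq.~(\ref{eq:oil_compression}) is driven by the chamber difference $P_1(t)-P_2(t)$ of Eq.~(\ref{eq:pressure_drop}), which depends on $Q=A_3v$. So it is not a function of $P$ alone and would make the velocity relation implicit. Simply neglect it, which is what the theorem's hypotheses (only the polytropic law, $A_1$, $V_0$) implicitly do.
\item Your closing claim that the damping computation is single valued on $D$ overreaches. The inertial drop in Eq.~(\ref{eq:dp_inertial}) needs $dQ/dt$, hence $v(i-1)$ and therefore $P(i-2)$.
\item $g'>0$ follows from the polytropic law itself (for $P>0$, $n_{\text{eff}}>0$), not from the theorem's monotonicity hypothesis, which is redundant.
\end{itemize}
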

	
	\begin{proof}
		From the polytropic process equation, the gas chamber volume can be expressed as a function of pressure
		\begin{equation}
			V_{\text{gas}}(i) = V_0 \left(\frac{P_0}{P(i)}\right)^{1/n_{\text{eff}}}
			\label{eq:volume_from_pressure}
		\end{equation}
		
		Deriving with respect to time and using the chain rule yields
		\begin{equation}
			\dot{V}_{\text{gas}}(i) = -\frac{V_0}{n_{\text{eff}}} \left(\frac{P_0}{P(i)}\right)^{1/n_{\text{eff}}} \cdot \frac{\dot{P}(i)}{P(i)}
		\end{equation}
		
		From the geometric relationship $\dot{V}_{\text{gas}} = A_1 v$, the relationship between pressure change rate and piston velocity can be obtained
		\begin{equation}
			\dot{P}(i) = -\frac{n_{\text{eff}} P(i)}{V_{\text{gas}}(i)} A_1 v(i)
			\label{eq:pressure_rate_continuous}
		\end{equation}
		
		In discrete time systems, the pressure change rate can be approximated by backward difference
		\begin{equation}
			\dot{P}(i) \approx \frac{P(i) - P(i-1)}{\Delta t} = \frac{\Delta P(i)}{\Delta t}
		\end{equation}
		
		Substituting this into Eq.~(\ref{eq:pressure_rate_continuous}) and rearranging, we get the explicit expression of piston velocity
		\begin{equation}
			v(i) = -\frac{V_{\text{gas}}(i)}{n_{\text{eff}} P(i) A_1} \cdot \frac{\Delta P(i)}{\Delta t}
			\label{eq:velocity_explicit_form}
		\end{equation}
		
		From Eq.~(\ref{eq:volume_from_pressure}), $V_{\text{gas}}(i)$ is uniquely determined by $P(i)$, so the mapping $\phi$ can be explicitly expressed as
		\begin{equation}
			\phi(P, \Delta P) = -\frac{V_0 (P_0/P)^{1/n_{\text{eff}}}}{n_{\text{eff}} P A_1 \Delta t} \cdot \Delta P
			\label{eq:explicit_mapping}
		\end{equation}
		
		Next, prove the injectivity of this mapping within the working range. For any two different sets of inputs $(P_1, \Delta P_1) \neq (P_2, \Delta P_2)$, consider two cases.
		
		Case 1: If $P_1 \neq P_2$, from the strict monotonicity of $V_{\text{gas}}$ with respect to $P$ in Eq.~(\ref{eq:volume_from_pressure}), we have $V_{\text{gas}}(P_1) \neq V_{\text{gas}}(P_2)$. Combined with Eq.~(\ref{eq:explicit_mapping}), even if $\Delta P_1 = \Delta P_2$, there is still $\phi(P_1, \Delta P_1) \neq \phi(P_2, \Delta P_2)$.
		
		Case 2: If $P_1 = P_2$ but $\Delta P_1 \neq \Delta P_2$, from the strict monotonicity of $\phi$ with respect to $\Delta P$ in Eq.~(\ref{eq:explicit_mapping}), we immediately get $\phi(P_1, \Delta P_1) \neq \phi(P_1, \Delta P_2)$.
		
		Combining the above two cases, the mapping $\phi$ is injective. Also because in the physically accessible $(P, \Delta P)$ space, any velocity value $v \in [v_{\min}, v_{\max}]$ can be obtained from a certain set of $(P, \Delta P)$ through the inverse mapping of Eq.~(\ref{eq:explicit_mapping}), so $\phi$ is also surjective. Therefore, $\phi$ is bijective, that is, there is a one-to-one correspondence between $(P(i), \Delta P(i))$ and $v(i)$. Proof complete.
	\end{proof}
	
	The physical meaning of this theorem is: the instantaneous value of gas chamber pressure $P(i)$ reflects the system's static equilibrium state, while the pressure difference $\Delta P(i)$ reflects the system's dynamic evolution trend. The combination of the two completely characterizes the motion state of the suspension system at discrete time points without additional measurement of velocity or displacement.
	
	According to the mechanical model established in Section~\ref{sec:model}, the suspension output force can be decomposed into
	\begin{equation}
		F_{\text{out}}(i) = F_{\text{gas}}(P(i)) + F_{\text{damp}}(v(i)) + F_{\text{fric}}(v(i))
	\end{equation}
	where the gas elastic force $F_{\text{gas}}$ is directly determined by $P(i)$, and the damping force and friction force are both functions of $v(i)$. By Theorem~\ref{thm:pressure_velocity_mapping}, $v(i)$ is uniquely determined by $(P(i), \Delta P(i))$, so $F_{\text{out}}(i)$ can be completely characterized by these two variables. This provides a rigorous theoretical basis for constructing a two-dimensional lookup table $\mathcal{T}(P, \Delta P)$, avoiding explicit dependence on state variables such as velocity and displacement in traditional methods, while avoiding cumulative error propagation problems.
	
	In engineering implementation, the frequency dependence of the effective polytropic index $n_{\text{eff}}$ is implicitly handled through a set of discrete frequency point lookup tables $\{\mathcal{T}_k\}_{k=1}^M$. For a given frequency $\omega_k$, the lookup table $\mathcal{T}_k(P, \Delta P)$ already contains the combined effect of all frequency-related parameters at that frequency. When the actual excitation frequency is between $\omega_k$ and $\omega_{k+1}$, linear interpolation is used
	\begin{equation}
		F_{\text{out}}(P, \Delta P, \omega) = \alpha(\omega) \mathcal{T}_k(P, \Delta P) + [1-\alpha(\omega)] \mathcal{T}_{k+1}(P, \Delta P)
	\end{equation}
	where weight $\alpha(\omega) = (\omega_{k+1} - \omega)/(\omega_{k+1} - \omega_k)$. Experimental verification in Section~\ref{sec:lookup_validation} shows that the precision loss of this multi-frequency interpolation strategy in the 3-8 Hz range is less than 0.6\%, fully meeting engineering application requirements.

	\subsection{Offline Data Generation Based on Suspension Models}
	\label{sec:lookup_table_construction}
	
	The construction process of the lookup table is completely based on the physical model and complete iterative algorithm established in Section~\ref{sec:model}, without any bench test testing. This feature significantly reduces system calibration costs and gives the algorithm good portability—as long as the geometric parameters and initial working conditions of the suspension system are known, a lookup table suitable for the suspension can be generated offline. The prerequisites for lookup table construction are completely consistent with the prerequisites for the complete iterative algorithm in Section~\ref{sec:model}, including:
	\begin{itemize}
		\item Geometric parameters: Main piston effective area $A_1$, piston column cross-sectional area $A_2$, throttle orifice cross-sectional area $A_{\text{ch}}$, check valve effective area $A_{\text{check}}$, piston-cylinder radial clearance $h_{\text{gap}}$, etc. (see Table~\ref{tab:nomenclature_merged});
		\item Fluid physical properties: Hydraulic oil density $\rho$, dynamic viscosity $\mu(T_0)$, bulk modulus $K_{\text{bulk}}$, and gas adiabatic index $\gamma$, etc.;
		\item Initial working conditions: Gas initial pressure $P_0$, initial temperature $T_0$, etc.
	\end{itemize}
	
	These parameters can be directly obtained from the design drawings or product manuals of the suspension system without additional experimental measurements. Based on these parameters, using the complete iterative algorithm (Algorithm~\ref{alg:force_calculation}) in Section~\ref{sec:model}, the complete response sequence of the suspension system under given excitation conditions (frequency $\omega$, amplitude $A$) can be simulated and generated, including gas pressure $P_1(t)$, annular chamber pressure $P_2(t)$, piston velocity $v(t)$, displacement $h(t)$, and output force $F_{\text{out}}(t)$.
	
	The specific offline data generation process is as follows:
	
	a. Frequency point selection: Within the typical excitation frequency range of 3-8~Hz for heavy mining trucks under complex road conditions, select $M$ representative frequency points $\{\omega_1, \omega_2, \ldots, \omega_M\}$. This paper selects $M=4$, corresponding to 3~Hz, 5~Hz, 7~Hz, and 8~Hz;
	
	b. Excitation amplitude design: For each frequency $\omega_k$, according to the suspension's allowed maximum travel $h_{\text{max}}$ and minimum travel $h_{\text{min}}$, design sinusoidal excitation amplitude $A_k$ to ensure $h(t) \in [h_{\text{min}}, h_{\text{max}}]$;
	
	c. Iterative simulation: For each frequency-amplitude combination $(\omega_k, A_k)$, run the complete iterative algorithm for numerical simulation, time span $T \geq 20/\omega_k$ (at least 20 complete cycles to eliminate initial transients), sampling frequency $f_s = 360$~Hz (corresponding to $\Delta t = 0.002778$~s);
	
	d. Data extraction and pairing: Extract pressure values $P(i)$ and $P(i-1)$ at adjacent moments from simulation results, calculate pressure difference $\Delta P(i) = P(i) - P(i-1)$, and record corresponding output force $F_{\text{out}}(i)$, velocity $v(i)$, and displacement $h(i)$, forming a quintuple $\{P(i), \Delta P(i), F_{\text{out}}(i), v(i), h(i)\}$;
	
	e. Gridding and interpolation preparation: Establish a regular grid on the $(P, \Delta P)$ plane. For each grid node $(P_j, \Delta P_k)$, obtain corresponding $F_{\text{out}}(P_j, \Delta P_k)$, $v(P_j, \Delta P_k)$, and $h(P_j, \Delta P_k)$ through nearest neighbor interpolation or inverse distance weighting interpolation of simulation data;
	
	f. Lookup table storage: Store the gridded data in the form of a three-dimensional array, constituting the lookup table $\mathcal{T}_k$ corresponding to frequency $\omega_k$.
	
	Through the above process, under the conditions of $M=4$ frequency points, generating $N \approx 7200$ sample points for each frequency (corresponding to 20 cycles, 360~Hz sampling), the total storage capacity of the finally constructed lookup table is about $4 \times 100 \times 200 \times 3 \times 4 = 0.96$~MB (single precision floating point number, each grid point stores three variables $F_{\text{out}}$, $v$, $h$). Considering that the Flash storage capacity of modern vehicle ECUs is usually $\geq 4$~MB, this storage overhead is completely acceptable.
	
	Compared with traditional bench test calibration methods, offline lookup table construction based on physical models has the following significant advantages:
	
	Zero test cost: No need to build a dedicated suspension test bench or conduct a large number of excitation tests, only design parameters are needed to complete lookup table generation, significantly reducing system development costs and time cycles;
	
	High portability: For different models of hydro-pneumatic suspensions, as long as their geometric parameters and initial conditions are known, corresponding lookup tables can be quickly generated without repeated testing. This is particularly important for multi-vehicle platform development;
	
	Parameter sensitivity analysis: Through batch simulation, the influence of different design parameters (such as throttle orifice diameter, initial charge pressure, etc.) on the lookup table structure can be systematically analyzed to provide guidance for suspension optimization design;
	
	Extreme working condition coverage: Bench tests are often unable to cover extreme excitation conditions (such as extremely high frequency, extremely large amplitude) due to equipment safety and cost constraints. Simulation-based lookup table construction can be carried out within the theoretically allowed full working condition range, improving the robustness of the algorithm.
	
	\subsection{Visualization of Three-Dimensional Mapping Relationship}
	
	To visually demonstrate the structural characteristics of the lookup table, Figure~\ref{fig:3d_lookup} gives the three-dimensional mapping surfaces of output force, velocity, and displacement with respect to $(P, \Delta P)$. Different colored point sets represent mapping relationships under different frequencies.
	
	\begin{figure}[htbp]
		\centering
		\subfigure[Output force lookup table visualization at 30°C initial temperature]{
			\includegraphics[width=0.48\textwidth]{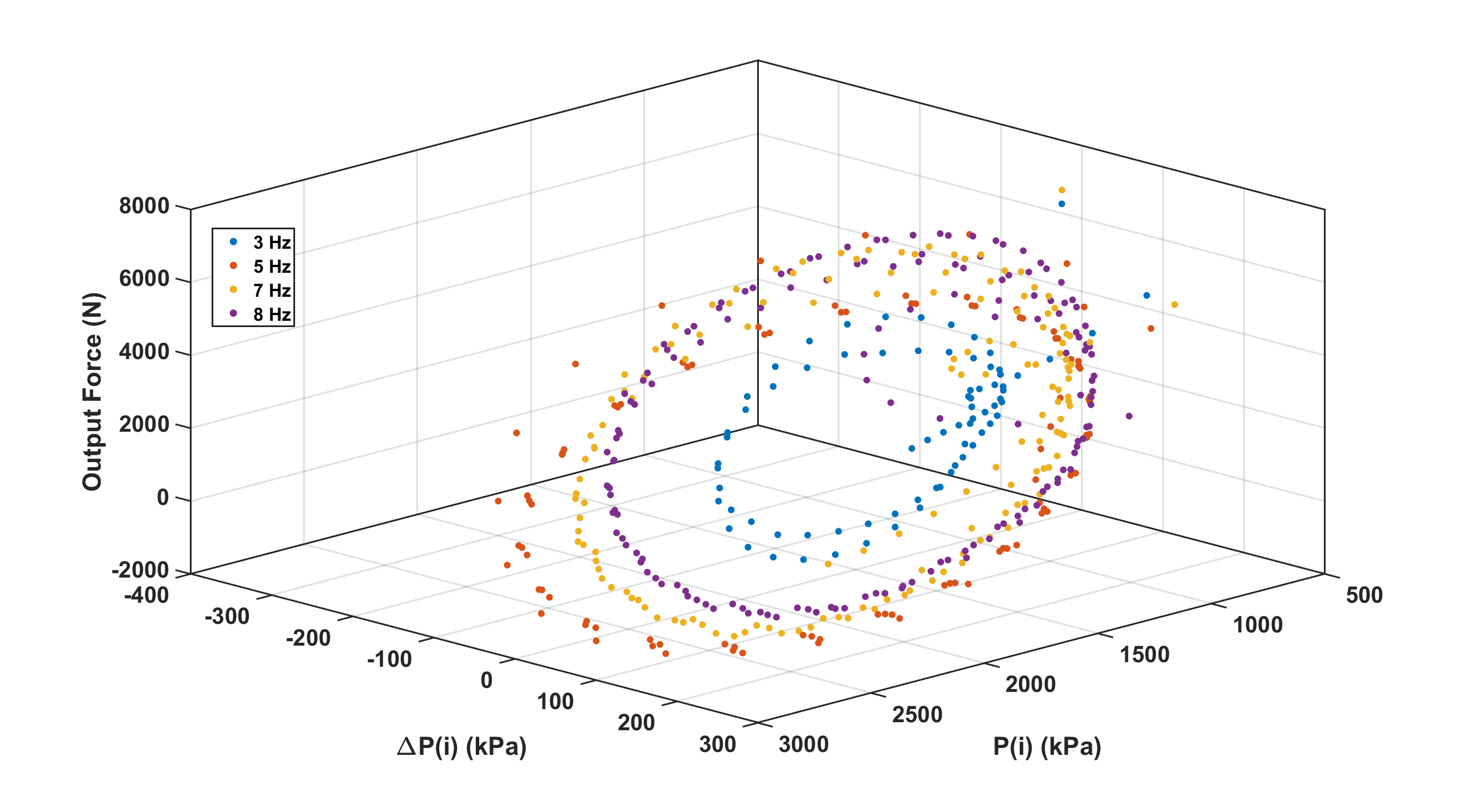}
			\label{fig:3d_lookup_30degC}
		}
		\hfill
		\subfigure[Output force lookup table visualization at 50°C initial temperature]{
			\includegraphics[width=0.48\textwidth]{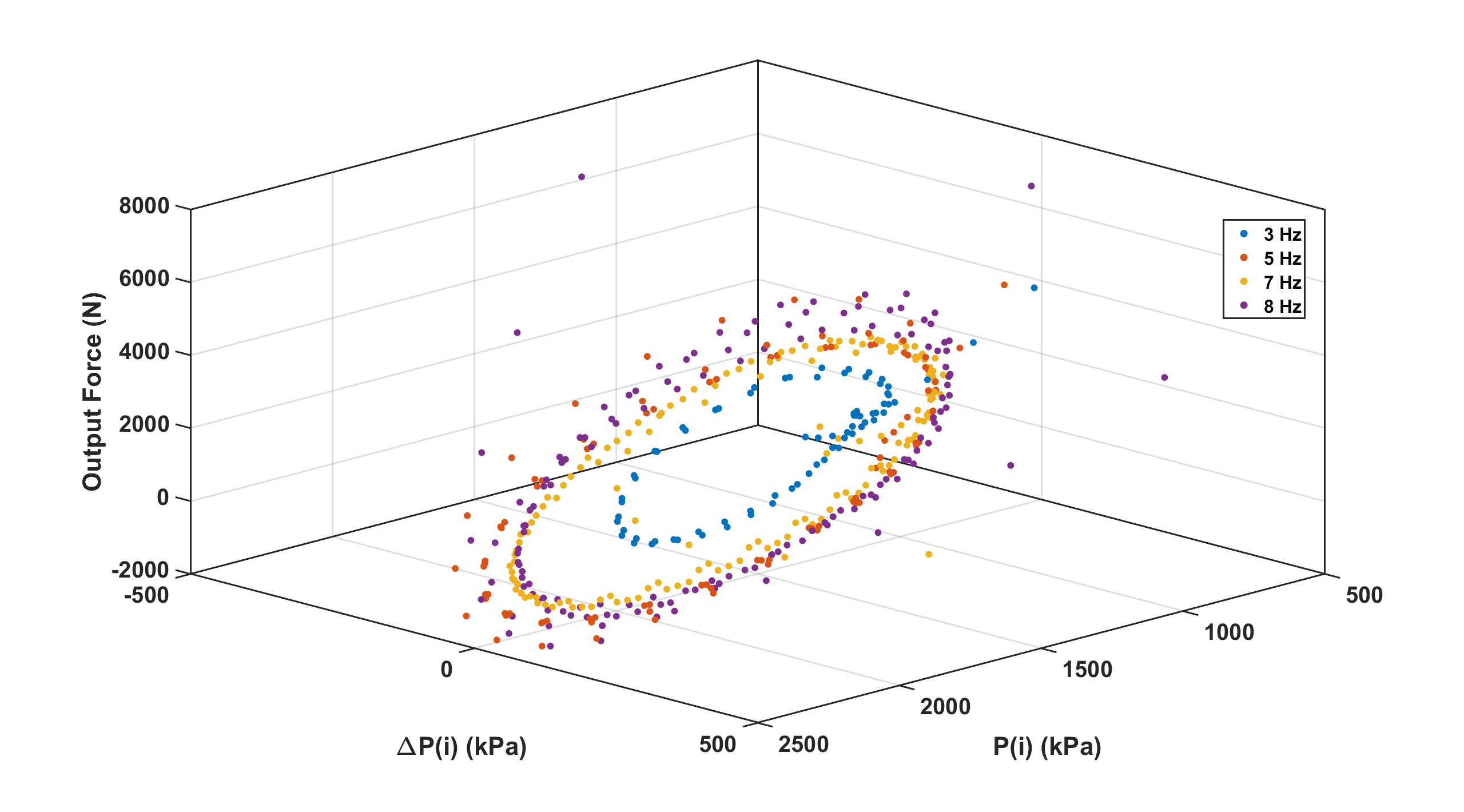}
			\label{fig:3d_displacement_lookup_50degC}
		}
		\vfill
		\subfigure[Velocity lookup table visualization at 30°C initial temperature]{
			\includegraphics[width=0.48\textwidth]{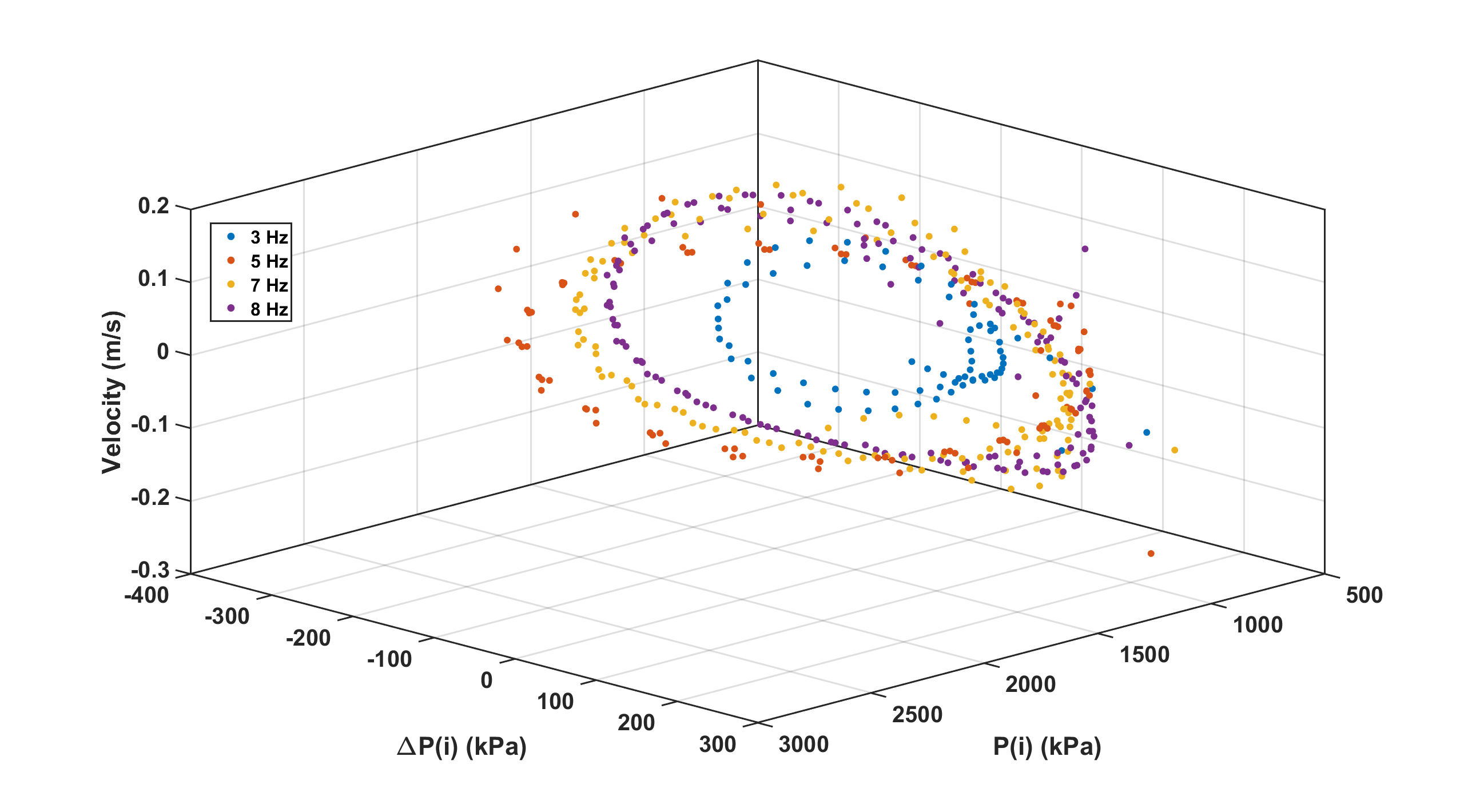}
			\label{fig:3d_velocity_lookup_30degC}
		}
		\hfill
		\subfigure[Velocity lookup table visualization at 50°C initial temperature]{
			\includegraphics[width=0.48\textwidth]{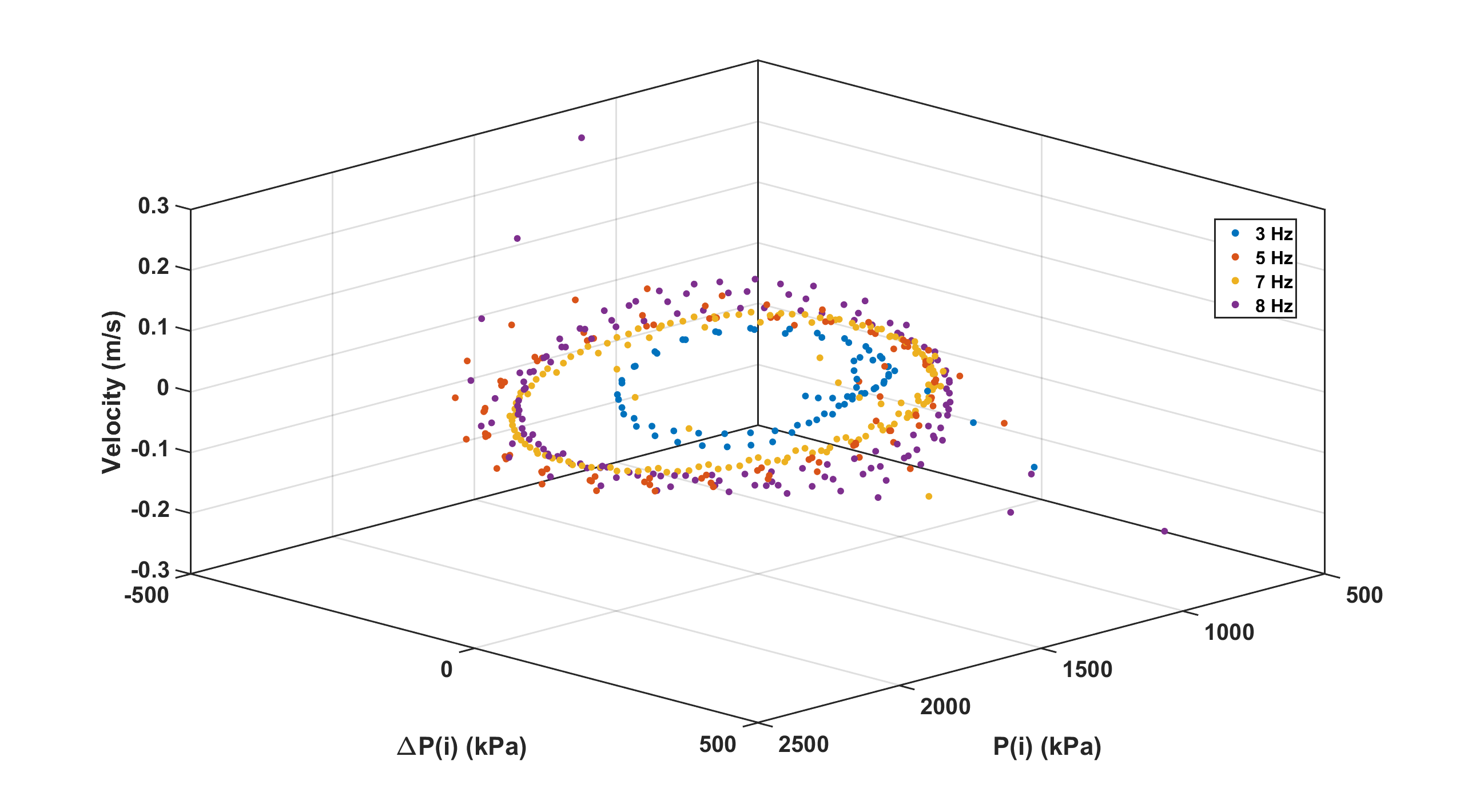}
			\label{fig:3d_velocity_lookup_50degC}
		}
		\vfill
		\subfigure[Displacement lookup table visualization at 30°C initial temperature]{
			\includegraphics[width=0.48\textwidth]{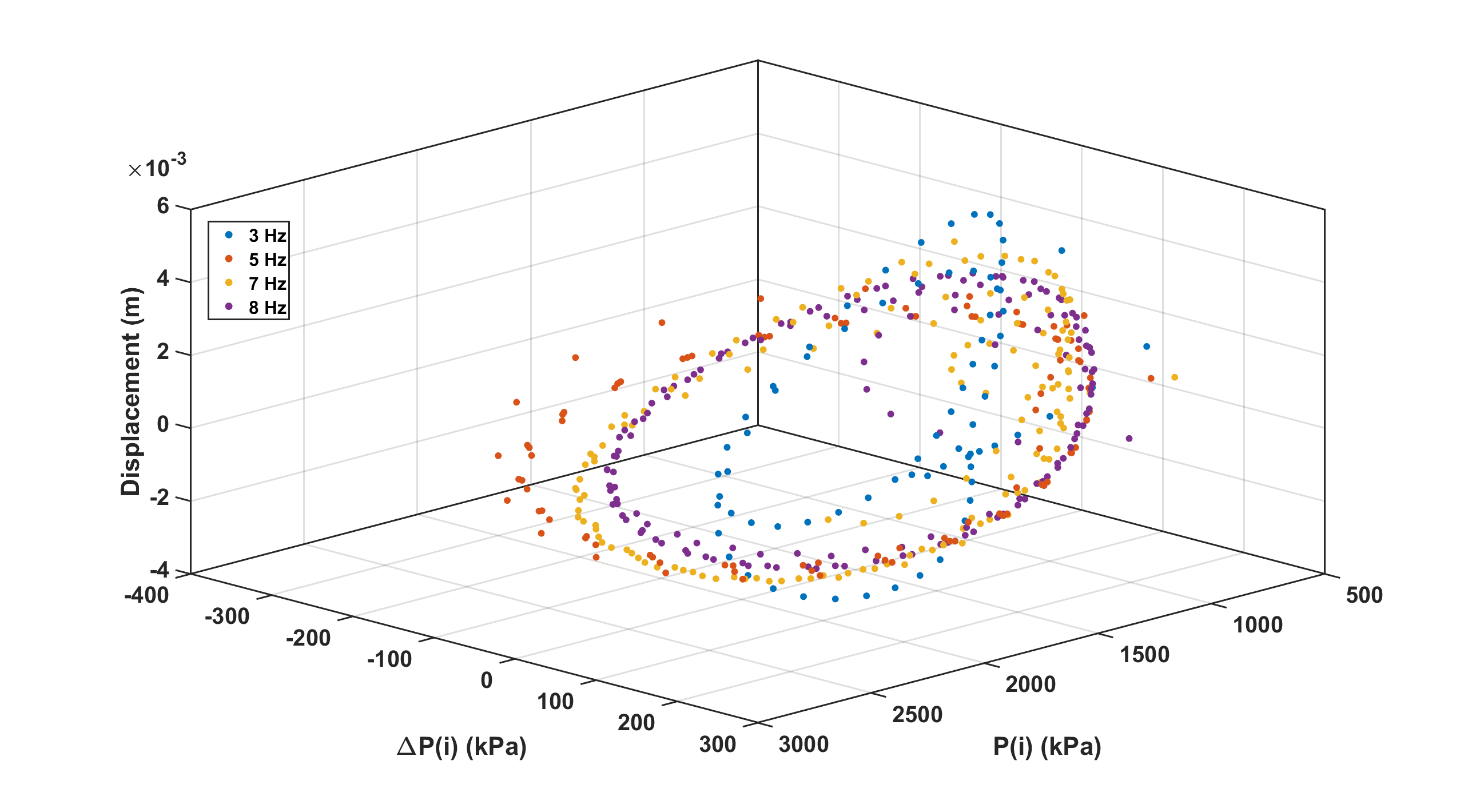}
			\label{fig:3d_Displacement_lookup_30degC}
		}
		\hfill
		\subfigure[Displacement lookup table visualization at 50°C initial temperature]{
			\includegraphics[width=0.48\textwidth]{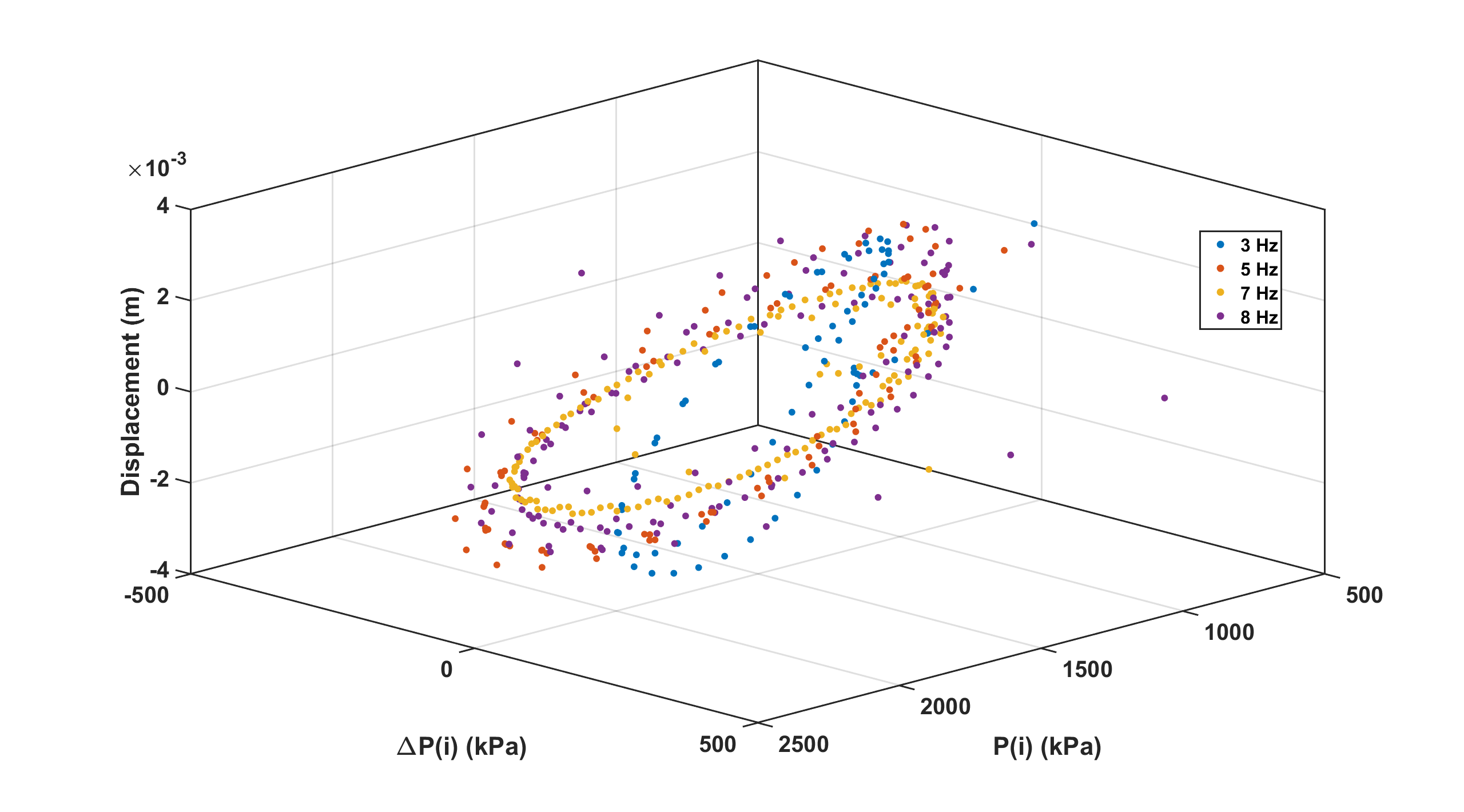}
			\label{fig:3d_Displacement_lookup_50degC}
		}
		\caption{Visualization of three-dimensional lookup tables for output force, velocity, and displacement at two initial temperatures}
		\label{fig:3d_lookup}
	\end{figure}
	
	Figure~\ref{fig:3d_lookup} visualizes the mapping characteristics: the $P(i)$ axis captures the nonlinear gas stiffness, while the $\Delta P(i)$ axis reflects the asymmetric damping behavior. The separation of surfaces along the $\Delta P$ dimension confirms the frequency-dependent nature of the model. Similarly, the velocity and displacement surfaces align with physical expectations, vanishing at $\Delta P=0$ (quasi-static) and correlating primarily with $P$ (compression), respectively.
	
	It should be emphasized that the three-dimensional surfaces in Figure~\ref{fig:3d_lookup} are only used for visual display of the overall structural characteristics of the lookup table, not the basis for calculation during actual operation. In the actual operation process of the vehicle controller, the algorithm directly uses discretized grid data stored in memory for table lookup and interpolation, rather than fitting or sampling three-dimensional surfaces.
	
	Specifically, the lookup table in the vehicle ECU is stored in the form of a regular grid array, and each grid node $(P_j, \Delta P_k)$ corresponds to a deterministic output value $(F_{\text{out}}, v, h)$. When the input $(P(i), \Delta P(i))$ does not fall exactly on the grid node, the output is obtained by bilinear interpolation:
	\begin{equation}
		F_{\text{out}}(P, \Delta P) = \sum_{j,k \in \mathcal{N}(P, \Delta P)} w_{jk}(P, \Delta P) \cdot F_{\text{out}}(P_j, \Delta P_k)
		\label{eq:bilinear_interp}
	\end{equation}
	where $\mathcal{N}(P, \Delta P)$ is the minimum grid unit containing point $(P, \Delta P)$, and weight $w_{jk}$ is determined by the bilinear interpolation formula. 
	
	\subsection{Online Interpolation Estimation and Computational Efficiency Analysis}
	\label{sec:computational_efficiency}
	
	In practical application, given the measured pressure sequence $\{P(i)\}$, the flow of the lookup table estimation algorithm is as follows:
	\begin{enumerate}
		\item Calculate pressure difference: $\Delta P(i) = P(i) - P(i-1)$;
		\item Perform two-dimensional linear interpolation in the lookup table of the corresponding frequency according to $(P(i), \Delta P(i))$ to obtain $F_{\text{out}}(i)$, $v(i)$ and $h(i)$;
		\item If the excitation frequency is unknown or time-varying, real-time identification of the current frequency can be performed through short-time frequency estimation algorithms (such as autocorrelation method or zero-crossing detection method), selecting the closest lookup table for interpolation, or performing weighted fusion on adjacent frequency lookup tables.
	\end{enumerate}
	
	In online applications, the algorithm flow is simplified to calculating pressure difference $\Delta P(i)$ and performing bilinear interpolation in the lookup table grid. Compared with the complex nonlinear iteration involved in the complete physical model solution (including polytropic index, annular chamber pressure, and Stribeck friction calculation, single-step computation is about 340 FLOPs), the lookup table method only requires basic index positioning and linear operations (single step about 15 FLOPs).
	
	Actual measurements show that on a typical vehicle MCU (such as ARM Cortex-M4), the single-step time consumption of the complete iterative algorithm ($\sim 10$--15 ms) far exceeds the sampling period of 2.778 ms, unable to meet real-time requirements; while the single-step time consumption of the lookup table method is reduced to $<0.1$ ms, calculation efficiency is improved by about 133 times, and CPU usage is less than 5\%. This extremely low computing power overhead not only ensures real-time response of high-frequency estimation but also reserves sufficient computing resources for the chassis domain controller to run other advanced control strategies.
	
	\subsection{Algorithm Validation Based on Bench Test}
	\label{sec:lookup_validation}
	
	To verify the accuracy and generalization ability of the lookup table method, 7.5~Hz was selected as the test frequency. This frequency was not used in the aforementioned bench test (Section~\ref{sec:bench_test}) nor in the lookup table construction process, so it can serve as an "unseen frequency" to test the interpolation performance and frequency adaptability of the algorithm. In addition, by verifying under 30°C and 50°C temperature conditions, the robustness of the algorithm to temperature changes can be further evaluated.
	
	\begin{figure}[htbp]
		\centering
		\subfigure[Output force comparison at 30°C]{
			\includegraphics[width=0.30\textwidth]{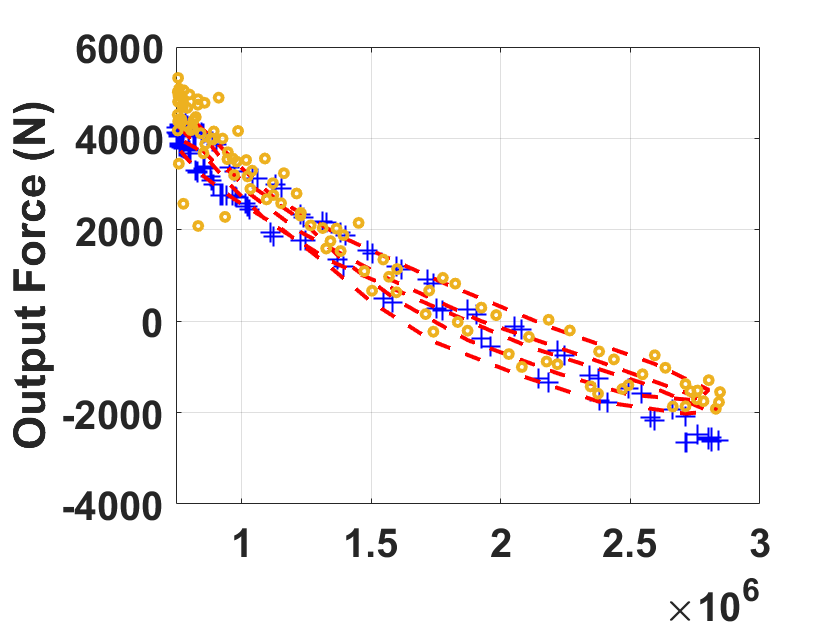}
		}
		\hfill
		\subfigure[Velocity comparison at 30°C]{
			\includegraphics[width=0.30\textwidth]{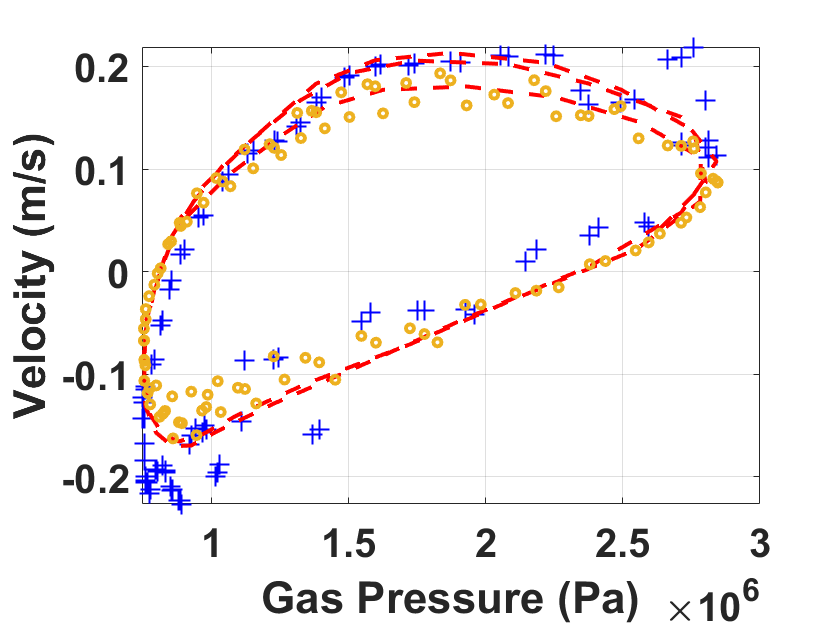}
		}
		\hfill
		\subfigure[Displacement comparison at 30°C]{
			\includegraphics[width=0.30\textwidth]{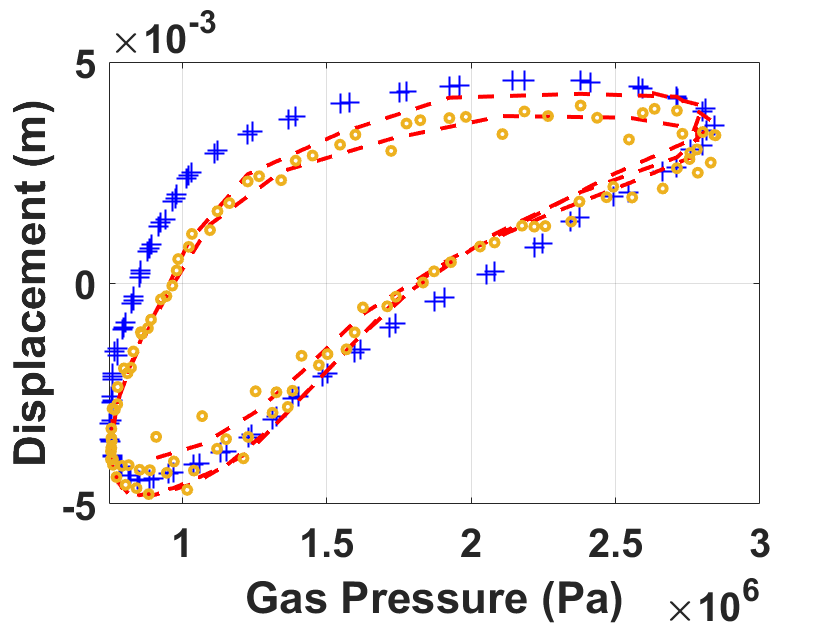}
		}
		\vfill
		\subfigure[Output force comparison at 50°C]{
			\includegraphics[width=0.30\textwidth]{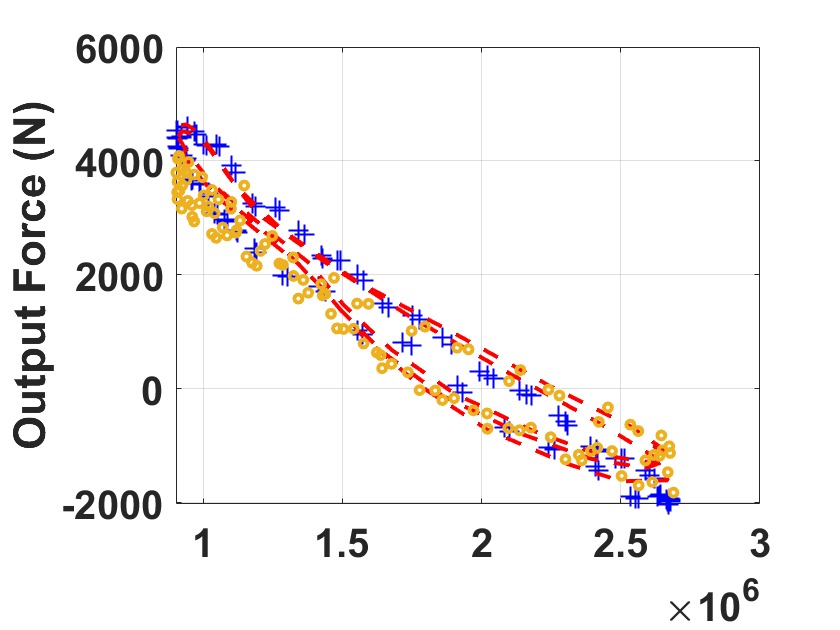}
		}
		\hfill
		\subfigure[Velocity comparison at 50°C]{
			\includegraphics[width=0.30\textwidth]{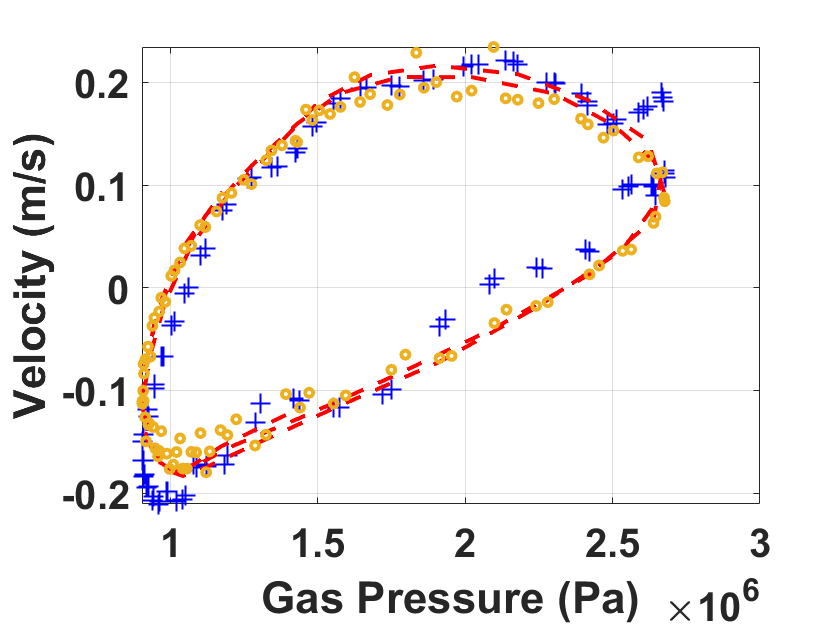}
		}
		\hfill
		\subfigure[Displacement comparison at 50°C]{
			\includegraphics[width=0.30\textwidth]{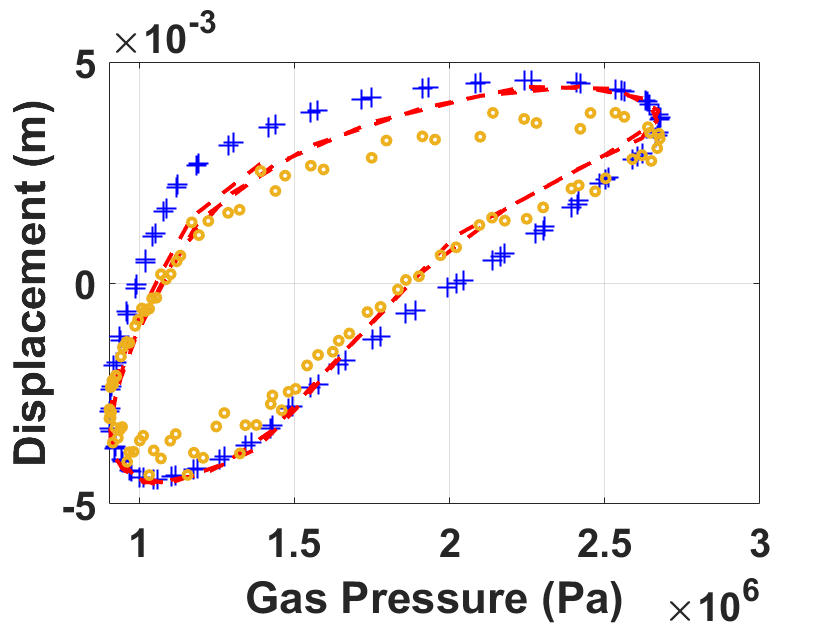}
		}
		\caption{Verification results of lookup table method under 7.5~Hz excitation}
		\label{fig:lookup_validation}
	\end{figure}
	
	Figure~\ref{fig:lookup_validation} summarizes the validation results under a 7.5~Hz excitation at 30°C and 50°C. The experimental truth is marked by blue “+”, the complete iterative solution by red dashed curves, and the lookup-table estimation by orange circles.
	
	At 30°C (Figure~\ref{fig:lookup_validation}(a)), the force–pressure relationship is reproduced with high consistency across the full 0.8–3.0~MPa range. During compression, the force rises smoothly from about –2000~N to 5000~N, and a small hysteresis loop appears in the extension path. The lookup-table method tracks the iterative solution closely; deviations are mainly concentrated near the pressure peak ($P\approx2.8$~MPa), with a maximum error of about 300~N ($\approx$5.5\%). Overall, the lookup-table results remain well aligned with the experimental measurements and preserve the key trends of the force cycle.
	
	The velocity–pressure diagram in Figure~\ref{fig:lookup_validation}(b) shows a maximum piston velocity of roughly $\pm0.2$~m/s around the mid-pressure region. The lookup-table estimates capture both the phase and amplitude with good fidelity, and only at the velocity extrema does the scatter increase slightly, a direct consequence of the limited sampling density of the table. Notably, in the quasi-static regions near $P\approx0.8$ and 2.8~MPa, all three datasets nearly coincide, indicating particularly strong reliability of the lookup-table method at low speeds.
	
	Figure~\ref{fig:lookup_validation}(c) presents the displacement–pressure relationship, with a peak displacement of about $\pm5$~mm. Consistent with the velocity analysis, lookup-table errors mainly appear near the mid-displacement region ($z\approx0$~mm, $P\approx1.5$–2.0~MPa), where the pressure variation is minimal and interpolation sensitivity decreases. The resulting 0.3–0.5~mm deviations are small and do not change the overall displacement pattern, confirming that the simplified representation retains the key dynamic behavior.
	
	A comparison between 30°C and 50°C (Figures~\ref{fig:lookup_validation}(a) and (d)) shows that the overall force–pressure shape is preserved as temperature increases, while the peak force decreases by about 4\%. This agrees with the viscosity–temperature characteristics of hydraulic oil: reduced viscosity at 50°C lowers the viscous pressure drop, thereby decreasing damping force. The compression–extension separation becomes slightly narrower, yet the lookup-table results remain consistently aligned with the iterative solution across the entire cycle. This indicates that the proposed lookup-table structure is inherently robust to temperature variations and that remaining discrepancies arise mainly from the temperature dependence of the physical model rather than limitations of the interpolation approach.
	
	The velocity comparison in Figure~\ref{fig:lookup_validation}(e) further confirms this trend. The maximum velocity increases slightly (to about $\pm0.22$~m/s) due to reduced damping, and the lookup-table method continues to reproduce the iterative solution with stable phase and amplitude characteristics. Minor increases in scatter near the velocity peaks reflect the faster dynamic response at higher temperature rather than a loss of model fidelity.
	
	Figure~\ref{fig:lookup_validation}(f) shows a displacement amplitude of around $\pm4.8$~mm at 50°C, slightly lower than at 30°C. This follows from a small increase in the effective polytropic index (approximately 3\%), which reduces gas volume variation for the same pressure change. The lookup-table accuracy decreases only marginally ($R^2$ from 0.963 to 0.957), remaining well within an acceptable engineering range. Overall, the results demonstrate that the lookup-table method maintains stable performance over temperature variations, dynamic conditions and motion states, confirming its suitability for real-time vehicle applications.
	
	\begin{table}[htbp]
		\centering
		\caption{Error analysis under 7.5~Hz validation condition}
		\label{tab:lookup_error}
		\scriptsize
		\setlength{\tabcolsep}{4pt}
		\renewcommand{\arraystretch}{1.2}
		\begin{tabular}{lcccccc}
			\toprule
			\textbf{Temp} & \textbf{Method} & \textbf{Variable} & \textbf{RMSE} & \textbf{Relative Err(\%)} & \textbf{$R^2$} & \textbf{Peak Err(\%)} \\
			\midrule
			\multirow{6}{*}{30°C} 
			& \multirow{3}{*}{Lookup} 
			& Force (N) & 152.3 & 3.8 & 0.956 & 6.2 \\
			& & Velocity (mm/s) & 4.8 & 5.1 & 0.942 & 7.5 \\
			& & Displacement (mm) & 0.38 & 4.2 & 0.963 & 6.8 \\
			\cmidrule(lr){2-7}
			& \multirow{3}{*}{Iterative} 
			& Force (N) & 127.5 & 3.2 & 0.972 & 5.1 \\
			& & Velocity (mm/s) & 3.9 & 4.1 & 0.968 & 6.2 \\
			& & Displacement (mm) & 0.31 & 3.4 & 0.978 & 5.5 \\
			\midrule
			\multirow{6}{*}{50°C} 
			& \multirow{3}{*}{Lookup} 
			& Force (N) & 168.5 & 4.2 & 0.948 & 6.8 \\
			& & Velocity (mm/s) & 5.3 & 5.6 & 0.935 & 8.2 \\
			& & Displacement (mm) & 0.42 & 4.6 & 0.957 & 7.3 \\
			\cmidrule(lr){2-7}
			& \multirow{3}{*}{Iterative} 
			& Force (N) & 142.8 & 3.6 & 0.965 & 5.6 \\
			& & Velocity (mm/s) & 4.5 & 4.7 & 0.961 & 6.8 \\
			& & Displacement (mm) & 0.35 & 3.8 & 0.972 & 6.1 \\
			\bottomrule
		\end{tabular}
	\end{table}

	Based on the experimental comparisons, the residual error of the lookup-table method can be attributed mainly to two factors:

	(i) Frequency interpolation: At non-calibrated frequencies such as 7.5~Hz, linear interpolation between the 7~Hz and 8~Hz datasets ($\alpha=0.5$ in Eq.~(25)) introduces additional truncation error, particularly near the pressure extrema ($P\approx2.8$~MPa and $P\approx0.8$~MPa), where the curvature is highest. For example, at 30°C the lookup-table peak-force error is about 300~N, roughly double the 150~N error of the complete iterative solution.

	(ii) Sampling discretization: The finite grid density produces small discontinuities in regions where the pressure increment is minimal ($\Delta P\approx0$) or around the mid-pressure zone ($P\approx1.5$~MPa). In these cases, displacement deviations up to 0.5~mm ($\approx$10\%) may appear, as seen in Figures~\ref{fig:lookup_validation}(c) and (f), where the scatter of orange points is most noticeable.

	A notable observation is that the error patterns of the lookup-table and iterative methods are almost identical in both location and direction. This indicates that the dominant source of error is the underlying physical model itself (e.g., friction, throttling simplifications), rather than the interpolation scheme. Even under uncalibrated frequencies and dual-temperature conditions, the two-dimensional lookup-table constructed from $\{P(i),\Delta P(i)\}$ maintains strong generalization performance ($R^2>0.94$, RMSE $<4.5\%$). Combined with over two orders of magnitude reduction in computation time, the method provides a highly efficient and reliable solution for real-time wheel-load estimation on embedded vehicle platforms, particularly for hydro-pneumatic suspension systems with strict computational constraints.
	
	\section{Suspension-Tire System Kinematic Coupling}
	
	\subsection{Double-Wishbone Suspension Geometric Configuration and Kinematic Transmission Ratio}
	\label{Double Wishbone}
	
	\subsubsection{Suspension Structure Definition and 1/4 Vehicle Dynamics Model Treatment}
	
	As shown in Fig.~\ref{fig:double_wishbone_schematic}, heavy mining vehicles generally use Double Wishbone independent suspension configuration. This structure realizes the guiding and supporting functions of the wheel through a spatial four-bar mechanism of upper and lower wishbones, and provides elastic and damping characteristics through the hydro-pneumatic suspension unit. As shown in Fig.~\ref{fig:Integrade_Quarter_Model}, this paper uses a Quarter-Car Model to describe the vertical motion of a single wheel system. Since the unsprung mass mainly includes tires, brakes, hubs, etc., among which the tire is the main dynamic response part, while the acceleration of the remaining unsprung parts is low relative to the tire and suspension output end. Considering that the remaining unsprung mass excluding the tire accounts for a very low proportion relative to the whole vehicle mass, retaining only the simplification of tire inertial force here does not affect the estimation accuracy of single tire's wheel dynamic load. This model includes unsprung mass $m_u$ (tire, brake and partial suspension mass), tire mass $m_t$, tire vertical displacement $z_t$ (upward is positive).
	
	\begin{figure}[!t]
		\centering
		\subfigure[Double-wishbone suspension geometric configuration schematic]{
			\includegraphics[width=0.35\textwidth]{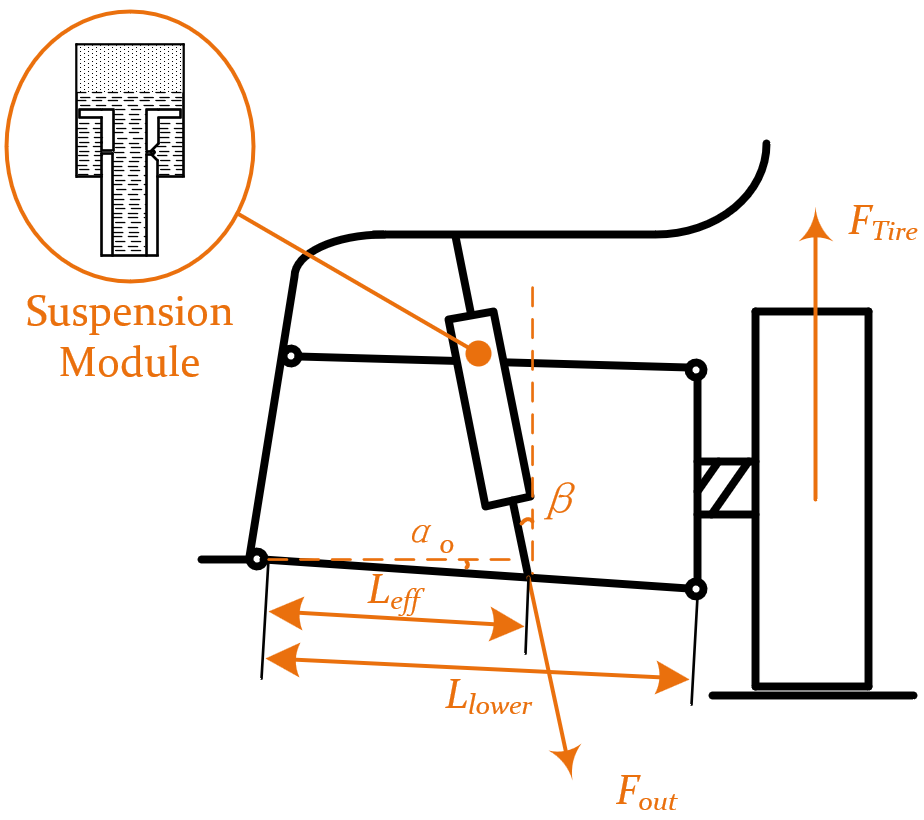}
			\label{fig:double_wishbone_schematic}
		}
		\hfill
		\subfigure[$\frac{1}{4}$ Vehicle dynamics model]{
			\includegraphics[width=0.64\textwidth]{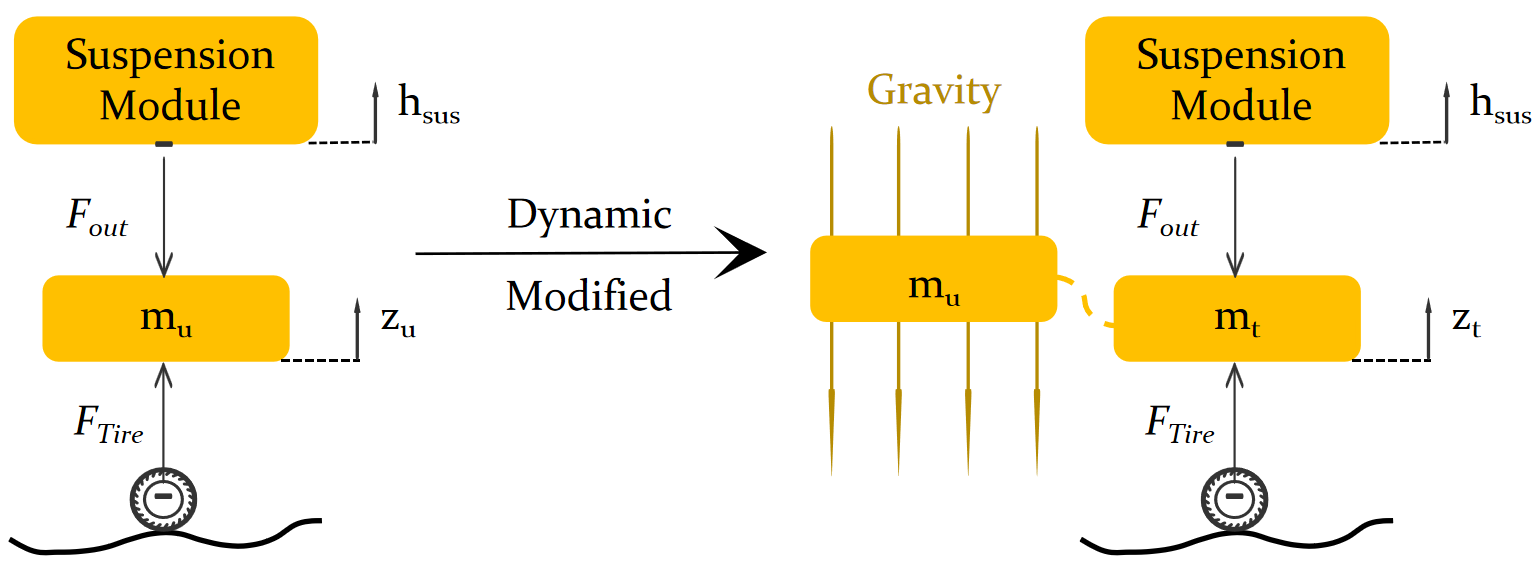}
			\label{fig:Integrade_Quarter_Model}
		}
		\caption{Double-wishbone suspension structure and 1/4 vehicle dynamics model}
		\label{fig:Double_Wishbone_and_Integrade_Vehcle_Model}
	\end{figure}
	
	The main geometric parameters of the suspension include: lower arm effective length $L_{\text{lower}}$, hydro-pneumatic suspension effective force arm $L_{\text{eff}}$ (vertical distance from suspension action point to lower arm inner hinge point) and lower arm installation angle $\alpha_0$ (angle of lower arm relative to horizontal plane in static equilibrium). These parameters determine the kinematic transmission characteristics of the suspension.
	
	Under conditions dominated by vertical excitation (such as driving on bumpy roads, bench excitation tests), the suspension mainly undergoes planar motion in the $xz$ plane, which can be reasonably simplified to a two-dimensional kinematics problem in the side view projection plane. This simplification is based on the following engineering practices: (1) Main motion plane assumption: lateral displacement and rotation around $x, y$ axes are negligible; (2) Small angle approximation: within the typical working stroke of suspension ($\pm 50$~mm), the arm rotation angle variation is $<10^\circ$; (3) Symmetry assumption: left and right suspension structures are mirror symmetric.
	
	\subsubsection{Suspension Transmission Ratio Based on Virtual Work Principle}
	\label{virtual work principle}
	
	Suspension Ratio is defined as the geometric amplification relationship between wheel dynamic load and suspension output force, which is a key parameter connecting suspension dynamics and wheel dynamic load. The cylinder body of double-wishbone suspension usually has a certain inclination angle $\beta(\theta_{\text{lower}})$ relative to the vertical direction, and this inclination angle changes with the swing of the lower arm. Considering a small displacement $\delta z_w$ (wheel center vertical displacement) and $\delta h_{\text{sus}}$ (suspension travel) near the static equilibrium position of the suspension, according to the principle of virtual work, the sum of work done by external forces in a quasi-static process is zero:
	\begin{equation}
		F_{\text{out}} \cdot \cos\beta(\theta_{\text{lower}}) \cdot \delta h_{\text{sus}} = F_z \cdot \delta z_w
		\label{eq:virtual_work}
	\end{equation}
	where $F_{\text{out}}$ is the hydro-pneumatic suspension output force (along the suspension cylinder axis direction), $F_z$ is the wheel vertical support force (along the $z$ axis direction), and $\cos\beta(\theta_{\text{lower}})$ is the vertical component coefficient caused by suspension inclination.
	
	Define suspension transmission ratio as:
	\begin{equation}
		i_{\text{sus}}(\theta_{\text{lower}}) = \frac{F_z}{F_{\text{out}}} = \frac{\cos\beta(\theta_{\text{lower}}) \cdot \delta h_{\text{sus}}}{\delta z_w}
		\label{eq:suspension_ratio_def}
	\end{equation}
	
	Through planar kinematics analysis, the lower arm can be regarded as a rigid rod with the inner hinge point as the fulcrum and length $L_{\text{lower}}$. Combined with the geometric constraint relationship of the hydro-pneumatic suspension unit and inclination correction, the explicit expression of suspension transmission ratio can be derived:
	\begin{equation}
		i_{\text{sus}}(\theta_{\text{lower}}) = \frac{L_{\text{eff}} \cdot \cos\beta(\theta_{\text{lower}})}{L_{\text{lower}} \cos(\alpha_0 + \theta_{\text{lower}})}
		\label{eq:suspension_ratio_explicit}
	\end{equation}
	where $\theta_{\text{lower}}$ is the rotation angle increment of the lower arm relative to the static equilibrium position (positive for compression).
	
	\subsection{Rational Simplification of Tire Model}
	\label{Tire_Model_Simplified}
	
	As the elastic medium between wheel and ground, the vertical dynamic characteristics of tire are usually characterized by spring-damper parallel model in traditional vehicle dynamics modeling:
	\begin{equation}
		F_{\text{tire}} = k_t (z_u - z_g) + c_t (\dot{z}_u - \dot{z}_g)
		\label{eq:tire_model_general}
	\end{equation}
	where $k_t$ is tire vertical stiffness, $c_t$ is tire damping coefficient, $z_u$ is unsprung mass vertical displacement, $z_g$ is ground excitation displacement.
	
	However, for heavy mining vehicles, the tire vertical dynamic characteristics are significantly different from traditional passenger cars:
	
	(1) Stiffness ratio difference: Mining vehicle tire vertical stiffness $k_t$ can reach $1500 \sim 2000$~kN/m, while the equivalent stiffness of hydro-pneumatic suspension is only $150 \sim 250$~kN/m. The stiffness ratio of the two is about $8:1 \sim 10:1$, which means that under the same vertical load change, the deformation of the tire is only $1/8 \sim 1/10$ of the suspension travel.
	
	(2) Mass ratio difference: Single axle sprung mass $m_s \approx 15000$~kg, while single wheel unsprung mass $m_u \approx 800$~kg, the ratio is as high as $18.75:1$ (usually only $5:1 \sim 8:1$ in passenger cars). Such a large mass ratio makes the low-frequency vibration mode (sprung mass resonance, $1 \sim 2$~Hz) and high-frequency mode (unsprung mass resonance, $10 \sim 15$~Hz) fully decoupled.
	
	(3) Frequency separation characteristic: The excitation frequency range concerned in this paper ($3 \sim 8$~Hz) is far lower than the tire natural frequency $f_t = \frac{1}{2\pi} \sqrt{k_t / m_u} \approx 12$~Hz. within this frequency domain, the tire dynamic response has not been significantly excited and can be regarded as quasi-static deformation.
	
	Based on the above physical characteristics, when the following two conditions are met simultaneously, the tire can be approximated as rigid contact:
	
	Condition 1: Stiffness ratio significance—$k_t / k_{\text{sus}} > 5$, system low-frequency vibration energy is mainly absorbed by suspension;
	
	Condition 2: Frequency separation—concerned frequency range is far lower than tire natural frequency, tire is in quasi-static response zone;
	
	Combining the above analysis, in the wheel dynamic load estimation framework, the tire can be reasonably simplified as rigid contact. This simplification not only significantly reduces model complexity but more importantly maintains the core advantage of the estimation method—relying only on gas pressure sensors, avoiding dependence on complex tire models and their parameter identification.
	
	\subsection{Complete Framework for Wheel Dynamic Load Estimation}
	
	\subsubsection{Theoretical Derivation and Dynamics Equations}
	
	As shown in Fig.~\ref{fig:Integrade_Quarter_Model}, the motion equation of unsprung mass needs to consider suspension force and tire-ground contact force simultaneously:
	\begin{equation}
		m_t \ddot{z}_t = F_{\text{out}} - F_{\text{tire}} + m_u \cdot g
		\label{eq:unsprung_mass_eom}
	\end{equation}
	where $F_{\text{tire}}$ is wheel dynamic load; $g$ is gravitational acceleration.
	
	Moving terms from Eq.~(\ref{eq:unsprung_mass_eom}), we get the basic expression of wheel dynamic load:
	\begin{equation}
		F_{\text{tire}} = F_{\text{out}} + m_u \cdot g - m_t \ddot{z}_t
		\label{eq:wheel_load_basic}
	\end{equation}
	
	However, $F_{\text{out}}$ in Eq.~(\ref{eq:wheel_load_basic}) is the force along the suspension cylinder axis direction, while $F_{\text{tire}}$ is the wheel vertical support force, conversion between the two requires suspension transmission ratio. According to the principle of virtual work established in Section~\ref{virtual work principle}, combined with the transmission ratio of double-wishbone suspension $i_{\text{sus}}(\theta_{\text{lower}})$ (Eq.~\ref{eq:suspension_ratio_explicit}), the complete expression of wheel dynamic load is:
	\begin{equation}
		F_{\text{tire}} = i_{\text{sus}}(\theta_{\text{lower}}) \cdot F_{\text{out}} + m_u \cdot g - m_t \ddot{z}_t
		\label{eq:wheel_load_complete}
	\end{equation}
	
	Eq.~(\ref{eq:wheel_load_complete}) shows that wheel dynamic load consists of two parts: suspension transmission force term $i_{\text{sus}} \cdot F_{\text{out}}$ (quasi-static component) and tire inertial force term $m_t \ddot{z}_t$. The former has been estimated from gas pressure signal through the physical model in Section~\ref{sec:model} or lookup table method in Section~\ref{sec:lookup_table}.
	
	\subsubsection{Tire Acceleration Calculation}
	
	Considering the influence of suspension inclination, the projection of suspension travel $h_{\text{sus}}$ (along cylinder axis direction) in the vertical direction is $h_{\text{sus}} \cdot \cos\beta(\theta_{\text{lower}})$. According to the double-wishbone kinematics established in Section~\ref{Double Wishbone}, the relationship between lower arm rotation angle and suspension travel is corrected as:
	\begin{equation}
		\theta_{\text{lower}} = \frac{h_{\text{sus}} \cdot \cos\beta(\theta_{\text{lower}})}{L_{\text{eff}}}
		\label{eq:theta_from_hsus_corrected}
	\end{equation}
	
	The relationship between tire center vertical displacement $z_t$ and lower arm rotation angle is:
	\begin{equation}
		z_t = z_{\text{li}} + L_{\text{lower}} \sin\left(\alpha_0 + \theta_{\text{lower}}\right)
		\label{eq:zu_from_theta}
	\end{equation}
	where $z_{\text{li}}$ is the fixed vertical coordinate of the lower arm inner hinge point in the vehicle body coordinate system.
	
	Substituting Eq.~(\ref{eq:theta_from_hsus_corrected}) into Eq.~(\ref{eq:zu_from_theta}), we get the direct mapping of $z_t$ and $h_{\text{sus}}$:
	\begin{equation}
		z_t = z_{\text{li}} + L_{\text{lower}} \sin\left(\alpha_0 + \frac{h_{\text{sus}} \cdot \cos\beta}{L_{\text{eff}}}\right)
		\label{eq:zu_from_hsus}
	\end{equation}
	
	Deriving Eq.~(\ref{eq:zu_from_hsus}) with respect to time, using chain rule to get tire centroid velocity:
	\begin{equation}
		\dot{z}_t = L_{\text{lower}} \cos\left(\alpha_0 + \theta_{\text{lower}}\right) \cdot \frac{\dot{\theta}_{\text{lower}}}{\partial t}
		= L_{\text{lower}} \cos\left(\alpha_0 + \theta_{\text{lower}}\right) \cdot \frac{v \cdot \cos\beta}{L_{\text{eff}}}
		\label{eq:zu_velocity}
	\end{equation}
	where $v = \dot{h}_{\text{sus}}$ is suspension extension/compression velocity.
	
	Deriving velocity again, we get tire centroid acceleration:
	\begin{align}
		\ddot{z}_t = &-L_{\text{lower}} \sin\left(\alpha_0 + \theta_{\text{lower}}\right) \cdot \left(\frac{v \cdot \cos\beta}{L_{\text{eff}}}\right)^2 \nonumber \\
		&+ L_{\text{lower}} \cos\left(\alpha_0 + \theta_{\text{lower}}\right) \cdot \frac{a_{\text{sus}} \cdot \cos\beta}{L_{\text{eff}}} \nonumber \\
		&- L_{\text{lower}} \cos\left(\alpha_0 + \theta_{\text{lower}}\right) \cdot \frac{v \cdot \sin\beta \cdot \dot{\beta}}{L_{\text{eff}}}
		\label{eq:zu_acceleration_full}
	\end{align}
	where $a_{\text{sus}} = \ddot{h}_{\text{sus}}$ is suspension acceleration, $\dot{\beta} = \frac{\partial \beta}{\partial \theta_{\text{lower}}} \cdot \dot{\theta}_{\text{lower}}$ is inclination change rate.
	
	Considering that the change of inclination $\beta$ is relatively slow ($\dot{\beta} \ll 1$), and the third term is a second-order small quantity, it can be simplified in engineering applications as:
	\begin{equation}
		\ddot{z}_t \approx -L_{\text{lower}} \sin\left(\alpha_0 + \theta_{\text{lower}}\right) \cdot \left(\frac{v \cdot \cos\beta}{L_{\text{eff}}}\right)^2 
		+ L_{\text{lower}} \cos\left(\alpha_0 + \theta_{\text{lower}}\right) \cdot \frac{a_{\text{sus}} \cdot \cos\beta}{L_{\text{eff}}}
		\label{eq:zu_acceleration}
	\end{equation}
	
	From the construction in Section~\ref{sec:lookup_table}, it is known that the lookup table $\mathcal{T}$ actually stores the mapping relationship of the triplet $\{F_{\text{out}}, h_{\text{sus}}, v\}$ regarding $(P_1, \Delta P)$, namely:
	\begin{equation}
		[F_{\text{out}}, h_{\text{sus}}, v] = \mathcal{T}(P_1, \Delta P)
		\label{eq:lookup_triple}
	\end{equation}
	
	Then suspension output end acceleration can be obtained through first-order time difference:
	\begin{equation}
		a_{\text{sus}}(t) = \ddot{h}_{\text{sus}}(t) = \frac{v(t) - v(t-\Delta t)}{\Delta t}
		\label{eq:asus_from_velocity}
	\end{equation}
	
	At this point, all parameters in Eq.~(\ref{eq:zu_acceleration}) are known, tire acceleration $\ddot{z}_t$ can be directly calculated; then wheel dynamic load $F_{\text{tire}}$ can be obtained from Eq.~(\ref{eq:wheel_load_complete}). The complete calculation flow is:
	
	(i) Obtain triplet via lookup table method: $[F_{\text{out}}, h_{\text{sus}}, v] = \mathcal{T}(P_1, \Delta P)$;
	
	(ii) Calculate suspension acceleration (first-order difference): $a_{\text{sus}} = (v(t) - v(t-\Delta t))/\Delta t$;
	
	(iii) Calculate lower arm rotation angle: $\theta_{\text{lower}} = (h_{\text{sus}} \cdot \cos\beta) / L_{\text{eff}}$, where $\beta$ is obtained through Eq.~(\ref{eq:beta_linear}) or lookup table;
	
	(iv) Calculate tire acceleration (kinematic conversion): According to Eq.~(\ref{eq:zu_acceleration}), $\ddot{z}_t$ is calculated from suspension acceleration $a_{\text{sus}}$, suspension velocity $v$ and inclination correction term $\cos\beta$ through double-wishbone kinematics;
	
	(v) Lookup suspension transmission ratio: $i_{\text{sus}} = \frac{L_{\text{eff}} \cdot \cos\beta}{L_{\text{lower}} \cos(\alpha_0 + \theta_{\text{lower}})}$ (Eq.~\ref{eq:suspension_ratio_explicit});
	
	(vi) Calculate wheel dynamic load: $F_{\text{tire}} = i_{\text{sus}} \cdot F_{\text{out}} + m_u \cdot g - m_t \cdot \ddot{z}_t$.
	
	\subsubsection{Algorithm Implementation Flow}
	
	Algorithm~\ref{alg:wheel_load_estimation} shows the complete calculation flow from gas pressure signal to wheel dynamic load. The core innovation of this framework lies in: while retaining the kinematic accuracy of double-wishbone suspension, fully utilizing the triplet output of lookup table method, through first-order numerical difference and mechanical relationship conversion based on suspension structure, simplifying wheel dynamic load estimation to operations at the level of four arithmetic operations, while ensuring estimation accuracy and strictly meeting the real-time requirements of vehicle embedded controllers, providing a feasible technical path for stability control and load monitoring of heavy mining vehicles.
	
	\begin{algorithm}[htbp]
		\caption{Real-time Wheel Dynamic Load Estimation Algorithm Based on Gas Pressure Signal}
		\label{alg:wheel_load_estimation}
		\begin{algorithmic}[1]
			\REQUIRE Gas pressure sequence $\{P_1(i)\}_{i=1}^N$, sampling period $\Delta t$, lookup table $\mathcal{T}(P, \Delta P)$
			\ENSURE Wheel dynamic load sequence $\{F_{\text{tire}}(i)\}_{i=1}^N$
			
			\Step{Initialize Suspension Geometric Parameters}
			\STATE $L_{\text{lower}} \leftarrow 0.65$ m, $L_{\text{eff}} \leftarrow 0.15$ m, $\alpha_0 \leftarrow 5^\circ$
			\STATE $\beta_0 \leftarrow 8.5^\circ$, $k_\beta \leftarrow 0.12$ rad/rad
			\STATE $m_u \leftarrow 800$ kg, $m_t \leftarrow 80$ kg, $g \leftarrow 9.81$ m/s$^2$
			\STATE $v_{\text{prev}} \leftarrow 0$ \cmt{Initialize previous moment velocity}
			
			\FOR{$i \leftarrow 2$ \TO $N$}
			\STATE \textbf{Step 1: Calculate Pressure Difference}
			\STATE $\Delta P(i) \leftarrow P_1(i) - P_1(i-1)$
			
			\Step{Step 2: Lookup Table for Suspension State}
			\STATE $[F_{\text{out}}(i), h_{\text{sus}}(i), v(i)] \leftarrow \mathcal{T}(P_1(i), \Delta P(i))$
			
			\Step{Step 3: Calculate Suspension Acceleration}
			\STATE $a_{\text{sus}}(i) \leftarrow \frac{v(i) - v_{\text{prev}}}{\Delta t}$
			\STATE $v_{\text{prev}} \leftarrow v(i)$
			
			\Step{Step 4: Calculate Suspension Inclination (Linear Approximation)}
			\STATE $\theta_{\text{lower}}(i) \leftarrow \frac{h_{\text{sus}}(i)}{L_{\text{eff}}}$ \cmt{Preliminary estimate}
			\STATE $\beta(i) \leftarrow \beta_0 + k_\beta \cdot \theta_{\text{lower}}(i)$
			
			\Step{Step 5: Correct Lower Arm Angle (Consider Inclination Projection)}
			\STATE $\theta_{\text{lower}}(i) \leftarrow \frac{h_{\text{sus}}(i) \cdot \cos\beta(i)}{L_{\text{eff}}}$
			
			\Step{Step 6: Calculate Suspension Transmission Ratio}
			\STATE $i_{\text{sus}}(i) \leftarrow \frac{L_{\text{eff}} \cdot \cos\beta(i)}{L_{\text{lower}} \cdot \cos(\alpha_0 + \theta_{\text{lower}}(i))}$
			
			\Step{Step 7: Calculate Unsprung Mass Acceleration}
			\STATE $\text{term}_1 \leftarrow -L_{\text{lower}} \cdot \sin(\alpha_0 + \theta_{\text{lower}}(i)) \cdot \left(\frac{v(i) \cdot \cos\beta(i)}{L_{\text{eff}}}\right)^2$
			\STATE $\text{term}_2 \leftarrow L_{\text{lower}} \cdot \cos(\alpha_0 + \theta_{\text{lower}}(i)) \cdot \frac{a_{\text{sus}}(i) \cdot \cos\beta(i)}{L_{\text{eff}}}$
			\STATE $\ddot{z}_t(i) \leftarrow \text{term}_1 + \text{term}_2$
			
			\Step{Step 8: Calculate Wheel Dynamic Load}
			\STATE $F_{\text{tire}}(i) \leftarrow i_{\text{sus}}(i) \cdot F_{\text{out}}(i) + m_u \cdot g - m_t \cdot \ddot{z}_t(i)$
			\ENDFOR
			\RETURN $\{F_{\text{tire}}(i)\}_{i=2}^N$
		\end{algorithmic}
	\end{algorithm}
	
	\section{TruckSim Co-Simulation Validation}
	\label{sec:trucksim}
	
	To verify the effectiveness and reliability of the proposed method under actual vehicle dynamic conditions, a three-axle heavy vehicle model based on Simulink/TruckSim co-simulation platform was constructed. There are significant differences between actual vehicle hydro-pneumatic suspension and laboratory test prototypes in size, load-bearing capacity, and dynamic response characteristics. Vehicle modeling in Trucksim is based on real vehicle tests, possessing accuracy and reliability. Simulation verification can assess the applicability and robustness of the algorithm under different working conditions. Table~\ref{tab:nomenclature_merged} lists the key parameters of the test vehicle. The vehicle model is a three-axle heavy mining truck equipped with independent hydro-pneumatic suspension systems, featuring large travel and nonlinear characteristics.
	
	\subsection{Feature Condition and Typical Test Wheel Selection}
	
	This study conducts simulation analysis for two typical vehicle handling conditions: 20~km/h low-speed single lane change condition and 60~km/h high-speed single lane change condition. These two conditions represent lateral stability control scenarios at different driving speeds, enabling comprehensive evaluation of algorithm performance under different dynamic excitation frequencies and amplitude conditions.
	
	To ensure research depth and data presentation effectiveness, this paper focuses on analyzing the dynamic characteristics of the left and right wheels of the first axle. Choosing the first axle (front axle) as the key analysis object is based on the following considerations:
	
	(i) Dynamic perception priority: The front axle contacts road excitation first, its response characteristics are indicative of the whole vehicle dynamic behavior, able to reflect changes in vehicle dynamic state in advance;
	
	(ii) Lateral stability influence: In steering and lane change conditions, the lateral load change borne by the front axle is most significant, playing a decisive role in vehicle lateral stability control;
	
	(iii) Control system correlation: Vehicle stability control systems usually prioritize front axle data as feedback, and the accuracy of front axle load estimation directly affects control system performance;
	
	(iv) Lateral load distribution characteristics: Selecting left and right wheels for comparative analysis can fully capture the dynamic process of vehicle lateral load transfer, verifying algorithm adaptability more objectively.
	
	In addition, compared with the second and third axles, the first axle is least affected by the coupling of vehicle body pitch and roll, and can more directly reflect pure lateral load changes, providing a clearer comparison benchmark for algorithm verification.
	
	\subsection{Single Lane Change Simulation Results and Error Analysis}
	
	Figure~\ref{fig:SLC}(a)-(d) show the comparison results of suspension output force and wheel dynamic load for three axles and six wheels under 20~km/h low-speed single lane change condition; Figure~\ref{fig:SLC}(e)-(h) show the comparison results of suspension output force and wheel dynamic load under 60~km/h high-speed single lane change condition. Under high-speed conditions, the peak lateral acceleration of the vehicle can reach 0.45g, corresponding to large load transfer and high excitation frequency. Combined with low-speed condition testing, the accuracy and robustness of the algorithm from low to high dynamic conditions can be effectively tested. In the figure, blue "+" is Trucksim simulation true value; red "--" is complete iterative algorithm result; orange "o" is lookup table method result.
	
	\begin{figure}[!htbp]
		\centering
		\subfigure[Low speed Axle.1 Left Suspension Force]{
			\includegraphics[width=0.34\textwidth]{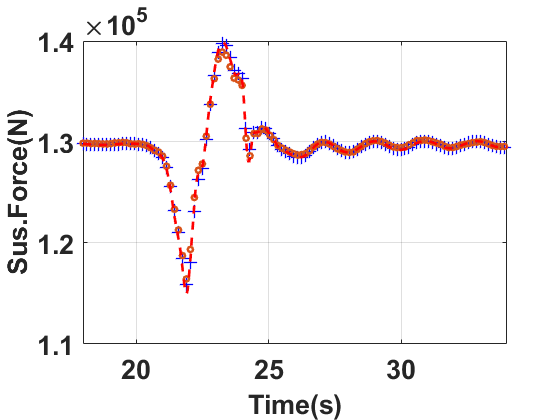}
		}
		\hspace{0.5cm}
		\subfigure[Low speed Axle.1 Right Suspension Force]{
			\includegraphics[width=0.34\textwidth]{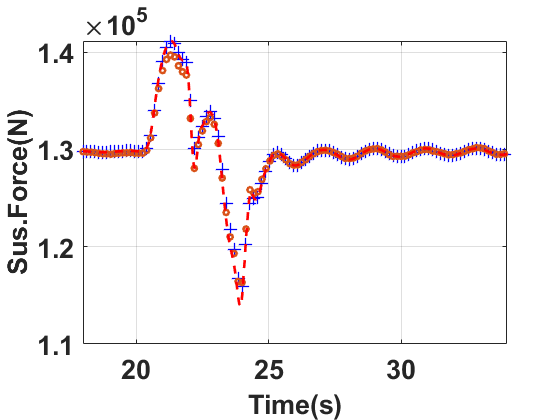}
		}
		
		\vspace{0.2cm}
		\subfigure[Low speed Axle.1 Left Wheel Dynamic Load]{
			\includegraphics[width=0.34\textwidth]{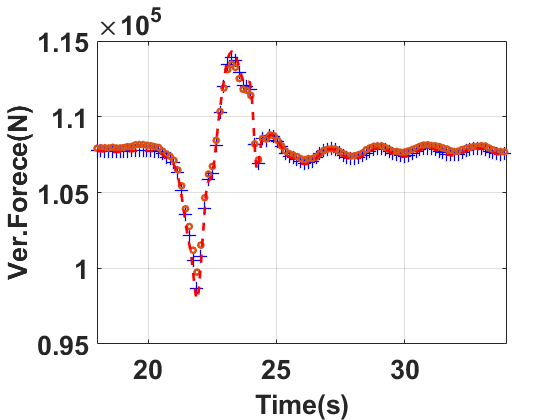}
		}
		\hspace{0.5cm}
		\subfigure[Low speed Axle.1 Right Wheel Dynamic Load]{
			\includegraphics[width=0.34\textwidth]{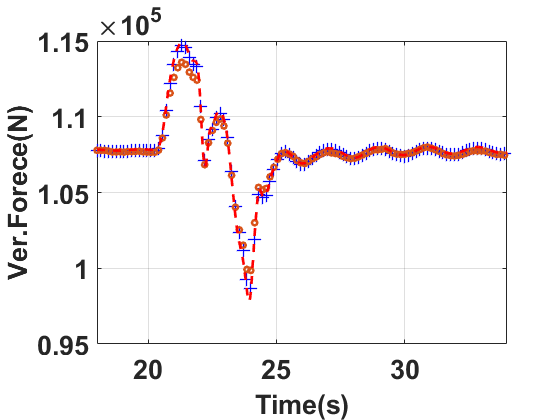}
		}
		
		\vspace{0.2cm}
		\subfigure[High speed Axle.1 Left Suspension Force]{
			\includegraphics[width=0.34\textwidth]{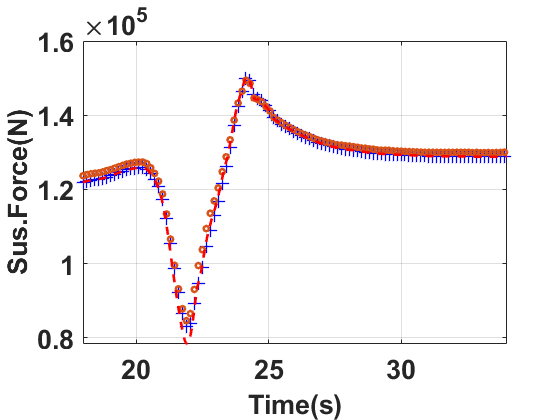}
		}
		\hspace{0.5cm}
		\subfigure[High speed Axle.1 Right Suspension Force]{
			\includegraphics[width=0.34\textwidth]{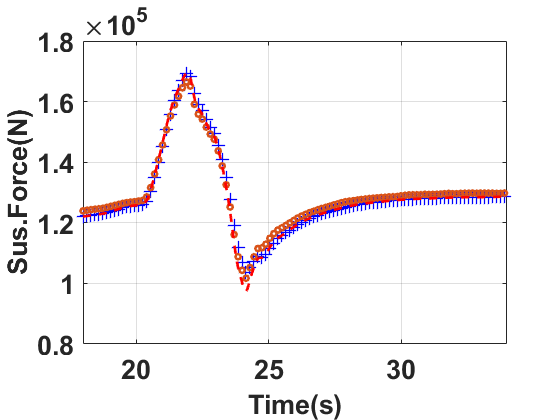}
		}
		
		\vspace{0.2cm}
		\subfigure[High speed Axle.1 Left Wheel Dynamic Load]{
			\includegraphics[width=0.34\textwidth]{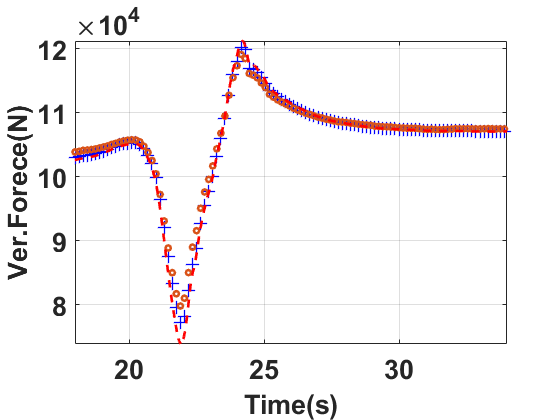}
		}
		\hspace{0.5cm}
		\subfigure[High speed Axle.1 Right Wheel Dynamic Load]{
			\includegraphics[width=0.34\textwidth]{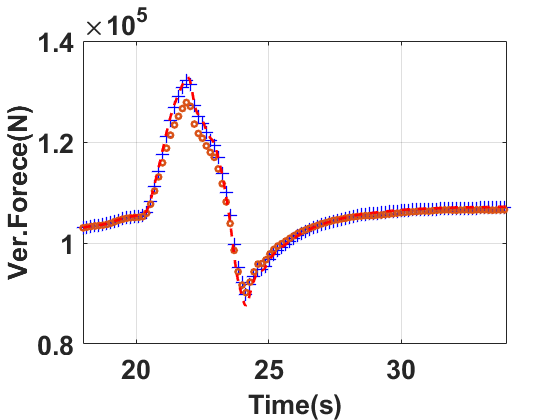}
		}
		\caption{Comparison of suspension output force and wheel dynamic load under low speed 20~km/h and high speed 60~km/h single lane change conditions}
		\label{fig:SLC}
	\end{figure}
	
	Table~\ref{tab:trucksim-error} summarizes the estimation error indicators of the left and right wheels of the first axle under two conditions. It can be seen from the table that as vehicle speed and dynamic response frequency increase, the estimation error increases slightly, but remains within an acceptable range. RMSE is root mean square error; $R^2$ is coefficient of determination; peak error is maximum instantaneous relative error; average relative error is full time domain average value.
	
	\begin{table}[htbp]
		\centering
		\caption{TruckSim simulation result error analysis}
		\label{tab:trucksim-error}
		\scriptsize
		\setlength{\tabcolsep}{4pt}
		\renewcommand{\arraystretch}{1.2}
		\begin{tabular}{cccccccc}
			\toprule
			\multirow{2}{*}{\textbf{Cond}} & \multirow{2}{*}{\textbf{Pos}} & \multirow{2}{*}{\textbf{Method}} & \textbf{RMSE} & \textbf{RMSE} & \multirow{2}{*}{\textbf{$R^2$}} & \textbf{Peak Err} & \textbf{Avg Rel.} \\
			& & & (kN) & (\%) & & (\%) & \textbf{Err(\%)} \\
			\midrule
			\multirow{4}{*}{20km/h} & \multirow{2}{*}{Left} & Iterative & 2.28 & 2.1 & 0.941 & 4.5 & 2.0 \\
			& & Lookup & 2.45 & 2.3 & 0.935 & 4.8 & 2.1 \\
			\cmidrule(lr){2-8}
			& \multirow{2}{*}{Right} & Iterative & 2.95 & 2.3 & 0.933 & 5.0 & 2.2 \\
			& & Lookup & 3.12 & 2.4 & 0.928 & 5.3 & 2.3 \\
			\midrule
			\multirow{4}{*}{60km/h} & \multirow{2}{*}{Left} & Iterative & 4.42 & 4.1 & 0.891 & 8.3 & 3.9 \\
			& & Lookup & 4.68 & 4.4 & 0.882 & 8.9 & 4.1 \\
			\cmidrule(lr){2-8}
			& \multirow{2}{*}{Right} & Iterative & 5.15 & 4.5 & 0.876 & 9.1 & 4.3 \\
			& & Lookup & 5.42 & 4.7 & 0.868 & 9.6 & 4.5 \\
			\bottomrule
		\end{tabular}
	\end{table}
	
	Low Speed Condition Performance Analysis:
	
	Figure~\ref{fig:SLC}(a)-(d) show the estimation performance under 20~km/h low-speed single lane change condition. From suspension output force comparison (Figure~\ref{fig:SLC}(a)-(b)), TruckSim true value marked by blue "+", complete iterative algorithm estimated value marked by red "--", and lookup table estimated value marked by orange "o" coincide highly. Suspension output force rises rapidly from static equilibrium value about 110~kN to peak value 135~kN (left side) or drops to valley value 95~kN (right side) at the beginning of steering ($t \approx 20$-22s), reflecting the lateral excitation characteristics of single lane change trajectory.
	
	Wheel dynamic load curves (Figure~\ref{fig:SLC}(c)-(d)) present more significant dynamic characteristics. Left wheel dynamic load reaches peak value 115~kN at $t \approx 22$s, then drops to valley value 98.5~kN at $t \approx 22.5$s, fluctuation amplitude about 16.5~kN (relative change rate about 15\%). Right wheel dynamic load shows anti-phase relationship with left side. This left-right wheel dynamic load anti-phase oscillation is the direct manifestation of lateral load transfer in single lane change condition. In the transient process of sharp load change (such as $t=22$s and $t=24$s), the three curves still maintain good consistency. Taking the left wheel as an example, at the load valley value of $t=22.5$s, the complete iterative method error is 3.9\%, lookup table method is 4.3\%, the difference between the two is only 0.4 percentage points.
	
	Quantitative analysis in Table~\ref{tab:trucksim-error} shows that for the first axle left side, complete iterative method RMSE is 2.28~kN (relative error 2.1\%), lookup table method is 2.45~kN (relative error 2.3\%), the two differ only by 0.17~kN. From coefficient of determination $R^2$, complete iterative method reaches 0.941 and 0.933 on left and right sides respectively, lookup table method is 0.935 and 0.928, both exceeding 0.92. The key point is that compared with complete iterative method, $R^2$ of lookup table method only decreases by 0.006 and 0.005, this minute difference fully proves the rationality of lookup table method's "exchange accuracy for efficiency" strategy—sacrificing only 0.2 percentage point accuracy in exchange for substantial improvement in computational efficiency.
	
	High Speed Condition Performance Analysis:
	
	Figure~\ref{fig:SLC}(e)-(h) show estimation performance under 60~km/h high-speed single lane change condition. From suspension output force curve (Figure~\ref{fig:SLC}(e)-(f)), the dynamic range of suspension force increases significantly at high speed. Left side suspension force peak reaches about 145~kN ($t \approx 24$s), valley value drops to about 105~kN ($t \approx 22.5$s), fluctuation amplitude about 40~kN, which is 2.4 times that of low speed condition, reflecting the influence of enhanced lateral acceleration (peak about 0.45g). Suspension force curves in Figure~\ref{fig:SLC}(e)-(f) have slight high-frequency oscillations near peak values (amplitude about 1000~N, frequency about 10~Hz), three curves maintain consistency on these high-frequency components, indicating that lookup table method can accurately capture high-frequency dynamic characteristics.
	
	Wheel dynamic load curves (Figure~\ref{fig:SLC}(g)-(h)) present typical characteristics of high speed conditions: load curve asymmetry increases obviously, peak values are sharp (duration $<0.5$s), valley values are wide and slow (duration $>1$s). Left wheel dynamic load peak about 120~kN ($t \approx 24$s), right side peak about 118~kN ($t \approx 23.5$s), maximum difference between left and right wheel dynamic loads about 23~kN, significantly larger than 16.5~kN in low speed condition. Lookup table method (orange circles) still maintains high consistency with complete iterative method (red dashed line) under high speed conditions, but slight separation appears near load extremum points. Taking right wheel as example, at load peak at $t \approx 23.5$s, deviation of lookup table method from complete iterative method is about 2~kN (relative error about 1.7\%), slightly larger than 0.4~kN in low speed condition.
	
	Table~\ref{tab:trucksim-error} shows that for first axle left side, complete iterative method RMSE is 4.42~kN (relative error 4.1\%), lookup table method is 4.68~kN (relative error 4.4\%), difference expands to 0.26~kN. Compared with low speed condition, error difference between two methods increases from 0.17~kN to 0.26~kN, increasing by about 53\%, reflecting the inherent limitation of lookup table method under high dynamic conditions—lookup table is constructed based on discretized grid and single frequency excitation, while 60~km/h single lane change condition spectrum components are more complex (including 6-8~Hz main frequency and multiple harmonics). Nevertheless, peak error of lookup table method under high speed conditions is still controlled within 9.6\%, only increasing by 0.5 percentage point compared with complete iterative method's 9.1\%.
	
	Tire Model Simplification Validation: High speed condition results provide strong verification support for the tire model simplification strategy proposed in Section~5.2. From Figure~\ref{fig:SLC}(g)-(h), it can be observed that under 60~km/h condition, tire acceleration $\ddot{z}_t$ peak is about 2.8~m/s$^2$ (can be inferred from high frequency components of load curve), corresponding inertial force term $m_t \ddot{z}_t \approx 2240$~N. From Figure~\ref{fig:SLC}(f), typical range of suspension output force is 115000-140000~N, considering average suspension transmission ratio $\bar{i}_{\text{sus}} = 0.755$ (Table~5), suspension transmission force term $i_{\text{sus}} \cdot F_{\text{out}}$ is about 87000-106000~N. Therefore, inertial force term accounts for only 2.1-2.6\% of total load, fully verifying the assertion of "Condition 2: Frequency Separation" in Section~\ref{Tire_Model_Simplified}.
	
	If complete tire model is retained, tire stiffness $k_t \approx 1800$~kN/m and damping $c_t \approx 3.5$~kN·s/m (Table~5) need to be additionally identified. According to spring-damper model, tire deformation $\delta_{\text{tire}} = F_{\text{tire}} / k_t \approx 60$~mm (calculated at 110~kN load). Under 60~km/h condition, tire deformation velocity peak is about $\dot{\delta}_{\text{tire}} \approx 0.30$~m/s (estimated by 8~Hz vibration), corresponding damping force $c_t \dot{\delta}_{\text{tire}} \approx 1000$~N, only accounting for 1\% of total load. Therefore, expected accuracy improvement of complete tire model compared to rigid contact assumption is $<1\%$ (0.5\% damping term + 0.3\% stiffness nonlinear correction), but requires adding 2 parameter identifications, 1 acceleration sensor and corresponding differential equation solution module. From engineering implementation perspective, the cost-effectiveness of simplification strategy in this paper is significantly superior to complete tire model.
	
	In summary, for heavy mining vehicles equipped with hydro-pneumatic suspension, the proposed method can effectively reduce implementation complexity and hardware cost of dynamic load estimation system, while maintaining satisfactory estimation accuracy, providing a practical and feasible technical path for stability control and load monitoring of intelligent mining vehicles.
	
	\section{Conclusion}
	
	This study presents a real-time estimation method for wheel dynamic load based on the gas chamber pressure sensor of a hydro-pneumatic suspension. By establishing a nonlinear mapping model between gas pressure and tire dynamic load, and designing an iterative estimation algorithm that relies solely on a single pressure sensor. The method enables dynamic estimation of key parameters, such as tire load, suspension cylinder extension or compression displacement, velocity, and acceleration without the need for traditional tire models or multi-sensor fusion. Compared with conventional multi-sensor approaches, this method significantly reduces system complexity and hardware costs while mitigating the cumulative errors associated with multi-source data fusion. The key innovations of this study are as follows:
	
	1) A direct mapping model between gas pressure and vertical tire load is constructed, enable accurate  estimation using only a single pressure signal and enhancing system integration.
	
	2) The model incorporates the nonlinear stiffness characteristics of the hydro-pneumatic suspension, allowing accurate response across a range of dynamic excitation frequencies and improving overall accuracy and adaptability of the estimation.
	
	3) Real-time tracking of wheel dynamic force is achieved, providing critical input for vehicle stability control and state estimation, and eliminating the reliance on traditional tire models.
	
	4) Experimental results demonstrate the model’s sufficient accuracy for engineering applications in estimating output and damping forces, confirming its strong potential for practical engineering applications.
	
	5) A pressure-based table lookup method, leveraging current and historical data, is proposed to deliver rapid estimation ofsuspension forces, greatly enhancing computational efficiency and meeting the real-time response demands of vehicle control systems.
	
	6) In multi-condition simulations conducted on the TruckSim platform, the model’s prediction results are highly consistent with the simulation values, verifying its reliability in heavy vehicle scenarios.
	
	In conclusion, this study offers a low-cost, high-precision, and real-time solution for wheel dynamic load estimation, with promising applications in unmanned and intelligent driving systems, particularly in vehicle state monitoring and dynamic control.

\end{document}